\documentclass[11pt,a4paper]{article}
\usepackage{filecontents}
\begin{filecontents}{\jobname.bib}
@ARTICLE{BGR,
   author       = "Bertola, M., Rebelo, J.G.E. and Grava, T.,",
   title        = {Painlev\'e IV Critical Asymptotics for Orthogonal Polynomials in the Complex Plane.},
   year         = "2018",
   journal      = "SIGMA",
   volume       = "14",
   pages        = "091", 
   DOI          = "10.3842/SIGMA.2018.091"
}

@ARTICLE{Widom2,
  title={Asymptotic behavior of block Toeplitz matrices and determinants. II},
  author={Widom, Harold},
  journal={Advances in Mathematics},
  volume={21},
  number={1},
  pages={1--29},
  year={1976},
  publisher={Academic Press},
  DOI = {10.1016/0001-8708(76)90113-4}
}

@ARTICLE{Widom1,
  title={Asymptotic behavior of block Toeplitz matrices and determinants},
  author={Widom, Harold},
  journal={Advances in Mathematics},
  volume={13},
  number={3},
  pages={284--322},
  year={1974},
  publisher={Academic Press},
  DOI = {10.1016/0001-8708(74)90072-3}
}

@ARTICLE{AS,
title={Asymptotic solutions of nonlinear evolution equations and a Painlev{\'e} transcedent},
  author={Segur, Harvey and Ablowitz, Mark J},
  journal={Physica D: Nonlinear Phenomena},
  volume={3},
  number={1-2},
  pages={165--184},
  year={1981},
  publisher={Elsevier},
  DOI = {10.1016/0167-2789(81)90124-X}
}

@ARTICLE{GL,
title={Fredholm determinant and Nekrasov sum representations of isomonodromic tau functions},
  author={Gavrylenko, P and Lisovyy, O},
  journal={Communications in Mathematical Physics},
  volume={363},
  number={1},
  pages={1--58},
  year={2018},
  publisher={Springer},
DOI = {10.1007/s00220-018-3224-7},}

@ARTICLE{CGL,
 title={Tau functions as Widom constants},
  author={Cafasso, Mattia and Gavrylenko, P and Lisovyy, O},
  journal={Communications in Mathematical Physics},
  volume={365},
  number={2},
  pages={741--772},
  year={2019},
  publisher={Springer},
  DOI = {10.1007/s00220-018-3230-9}}

@ARTICLE{B1,
 title={On the location of poles for the Ablowitz--Segur family of solutions to the second Painlev{\'e} equation},
  author={Bertola, Marco},
  journal={Nonlinearity},
  volume={25},
  number={4},
  pages={1179},
  year={2012},
  publisher={IOP Publishing},
  DOI = {10.1088/0951-7715/25/4/1179}} 

@BOOK{Fokas,
title={Painlev{\'e} transcendents: the Riemann-Hilbert approach},
  author={Fokas, Athanassios S and Its, Alexander R and Novokshenov, Victor Yu and Kapaev, Andrei A and Kapaev, Andrei I and Novokshenov, Victor Yu},
  number={128},
  year={2006},
  publisher={American Mathematical Soc.}
}

@ARTICLE{BI,
title={Asymptotics of a Fredholm determinant involving the second Painlev\'e transcendent},
  author={Bothner, Thomas and Its, Alexander},
  journal={arXiv preprint arXiv:1209.5415},
  year={2012}
}

@ARTICLE{Bot,
 title={Transition asymptotics for the Painlev{\'e} II transcendent},
  author={Bothner, Thomas },
  journal={Duke Mathematical Journal},
  volume={166},
  number={2},
  pages={205--324},
  year={2017},
  publisher={Duke University Press},
  DOI = {10.1215/00127094-3714650}
}

@ARTICLE{BIK,
title={On the Riemann--Hilbert--Birkhoff inverse monodromy problem and the Painlev{\'e} equations},
  author={Bolibruch, A and Its, A and Kapaev, A},
  journal={St. Petersburg Mathematical Journal},
  volume={16},
  number={1},
  pages={105--142},
  year={2005},
  DOI = {10.1090/S1061-0022-04-00845-3}
}

@ARTICLE{GGu,
title={Argyres-Douglas theories, Painlev{\'e} II and quantum mechanics},
  author={Grassi, Alba and Gu, Jie},
  journal={Journal of High Energy Physics},
  volume={2019},
  number={2},
  pages={60},
  year={2019},
  publisher={Springer},
  DOI = {10.1007/JHEP02(2019)060}
}

@ARTICLE{CWu,
title={Tau functions and the limit of block Toeplitz determinants},
  author={Cafasso, Mattia and Wu, Chao-Zhong},
  journal={International Mathematics Research Notices},
  volume={2015},
  number={20},
  pages={10339--10366},
  year={2015},
  publisher={Oxford University Press},
  DOI = {10.1093/imrn/rnu262}
}

@ARTICLE{BT,
title={On Painlev{\'e}/gauge theory correspondence},
  author         = "Bonelli, Giulio and Lisovyy, Oleg and Maruyoshi, Kazunobu
                        and Sciarappa, Antonio and Tanzini, Alessandro",
  journal={Letters in Mathematical Physics},
  volume={107},
  number={12},
  pages={2359--2413},
  year={2017},
  publisher={Springer},
  DOI = {10.1007/s11005-017-0983-6}
}

@ARTICLE{ILP,
title={Monodromy dependence and connection formulae for isomonodromic tau functions},
  author={Its, Alexander R and Lisovyy, O and Prokhorov, Andrei},
  journal={Duke Mathematical Journal},
  volume={167},
  number={7},
  pages={1347--1432},
  year={2018},
  publisher={Duke University Press},
  DOI = {10.1215/00127094-2017-0055}
}

@ARTICLE{IP,
title={Connection problem for the tau-function of the sine-Gordon reduction of Painlev{\'e}-III equation via the Riemann-Hilbert approach},
  author={Its, Alexander and Prokhorov, Andrei},
  journal={International Mathematics Research Notices},
  volume={2016},
  number={22},
  pages={6856--6883},
  year={2016},
  publisher={Oxford University Press},
  DOI = {10.1093/imrn/rnv375}
}

@ARTICLE{BC,
title={The transition between the gap probabilities from the Pearcey to the Airy process—a Riemann--Hilbert approach},
  author={Bertola, Marco and Cafasso, Mattia},
  journal={International Mathematics Research Notices},
  volume={2012},
  number={7},
  pages={1519--1568},
  year={2012},
  publisher={OUP},
  DOI = {10.1093/imrn/rnr066}
}

@ARTICLE{Shen,
title={Orthogonal polynomials on the unit circle associated with the Laguerre polynomials},
  author={Shen, Li-Chien},
  journal={Proceedings of the American Mathematical Society},
  volume={129},
  number={3},
  pages={873--879},
  year={2001}
}

@ARTICLE{Mal,
title={D{\'e}formations isomonodromiques des singularit{\'e}s r{\'e}guli{\`e}res},
  author={Malgrange, Bernard},
  journal={Les rencontres physiciens-math{\'e}maticiens de Strasbourg-RCP25},
  volume={31},
  pages={1--26},
  year={1983}
}

@INPROCEEDINGS{JMU,
  title={Monodromy preserving deformation of linear ordinary differential equations with rational coefficients: I. General theory and $\tau$-function},
  author={Michio Jimbo and Tetsuji Miwa and Kimio Ueno},
  year={1981},
  DOI = {10.1016/0167-2789(81)90013-0},
}

@ARTICLE{TW,
author = {Tracy, C.A. and Widom, H.},
title = {Airy kernel and Painlev\'e II.},
journal = {in Isomonodromic Deformations and Applications in Physics, CRM Proceedings and Lecture Notes},
volume = {Vol. 31},
publisher = {Amer. Math. Soc., Providence},
year = {2002},
pages = {pp. 85-98}
}

@ARTICLE{IIKS,
title={Differential equations for quantum correlation functions},
  author={Its, AR and Izergin, AG and Korepin, VE and Slavnov, NA},
  journal={International Journal of Modern Physics B},
  volume={4},
  number={05},
  pages={1003--1037},
  year={1990},
  publisher={World Scientific},
  DOI = {10.1142/S0217979290000504}
}

@ARTICLE{BD,
title={Fredholm determinants, Jimbo-Miwa-Ueno $\tau$-functions, and representation theory},
  author={Borodin, Alexei and Deift, Percy},
  journal={Communications on Pure and Applied Mathematics},
  volume={55},
  number={9},
  pages={1160--1230},
  year={2002},
  publisher={Wiley Online Library}
}

@ARTICLE{BO,
title={Distributions on Partitions, Point Processes, and the Hypergeometric Kernel},
  author={Borodin, Alexei and Olshanski, Grigori},
  journal={Communications in Mathematical Physics},
  volume={211},
  number={2},
  pages={335--358},
  year={2000},
  publisher={Springer},
  DOI = {10.1007/s002200050815},
}
\end{filecontents}
\usepackage{tikz}
\usepackage[english]{babel}
\usepackage{natbib,hyperref}
\usepackage{epsfig}
\usepackage{graphicx}
\usepackage{amsfonts,amsmath,amsthm,amssymb}
\newtheorem{theorem}{Theorem}%
\newtheorem{proposition}[]{Proposition}
\newtheorem{definition}{Definition}
\newtheorem*{remark}{Remark}
\usepackage{tikz}
\usepackage{color,hyperref}
\usepackage[margin=1.2in]{geometry}
\numberwithin{equation}{section}
\usetikzlibrary{fit,
  matrix,
  positioning,
  decorations,
  decorations.pathreplacing
}

\usepackage{latexsym}

\usepackage{xparse}
\usepackage{float,soul,ulem,cancel}
\providecommand*\email[1]{\href{mailto:#1}{#1}}
\DeclareMathOperator{\Tr}{Tr}

\def\&{\vspace{-5pt}&}

\def\blue#1{\textcolor[rgb]{0,0,1}{#1}}

\usepackage[]{youngtab}

\usepackage{xcolor}
\definecolor{MyBlue}{rgb}{0.25,0.5,0.75}
\colorlet{NextBlue}{MyBlue!20}
\colorlet{SecondBlue}{MyBlue!40}

\NewDocumentCommand{\tens}{t_}
 {%
  \IfBooleanTF{#1}
   {\tensop}
   {\otimes}%
 }
\NewDocumentCommand{\tensop}{m}
 {%
  \mathbin{\mathop{\otimes}\displaylimits_{#1}}%
 }


\begin{document}


\title{The $\tau$-function of the Ablowitz-Segur family of solutions to Painlev\'e  II as a Widom constant}

\author{ H. Desiraju\footnote{{\email{harini.desiraju@sissa.it}}}
  \\
  \normalsize\it Scuola Internazionale Superiore di Studi Avanzati,   \\
  \normalsize\it Via Bonomea, 265, 34136 Trieste, Italy.} 

\date{\vspace{-5ex}}
\maketitle

\begin{abstract}
 $\tau$-functions of certain Painlev\'e equations (PVI,PV,PIII) can be expressed as a Fredholm determinant. Further, the minor expansion of these determinants provide an interesting connection to Random partitions. 
 This paper is a step towards understanding whether the $\tau$-function of Painlev\'e II has a Fredholm determinant representation. The Ablowitz-Segur family of solutions are special one parameter solutions of Painlev\'e II and the corresponding $\tau$-function is known to be the Fredholm determinant of the Airy Kernel. We develop a formalism for open contour in parallel to the one formulated in \cite{CGL} in terms of the Widom constant and verify that the Widom constant for Ablowitz-Segur family of solutions is indeed the determinant of the Airy Kernel. Finally, we construct a suitable basis and obtain the minor expansion of the Ablowitz-Segur $\tau$-function.
 \end{abstract}

 \tableofcontents

\section{Introduction}
Painlev\'e equations are nonlinear second order ODEs  whose solutions are widely recognized  as important special functions with a broad range of applications. The integrability property of these equations was obtained by representing them as an isomonodromic system  of ordinary differential equations. The   Riemann-Hilbert  (RH)  method then proved to be a powerful technique to study  solutions and their properties. An important object related to the solutions is the so called $\tau$-function.
 
In the theory of isomonodromic deformations, the $\tau$-function ($\tau_{JMU}$) was introduced by the Kyoto school  and it is  constructed starting from a certain $1$-form $\omega_{JMU}$ on the space of the deformation parameters \cite{JMU}. If  the parameters are of isomonodromic type, then the form $\omega_{JMU}$ is closed  with respect to differentiation with respect to the parameters. The corresponding $\tau_{JMU}$ function is defined  locally as 
 \begin{equation}
 \label{1.1}
 d\log\tau_{JMU}=\omega_{JMU}
 \end{equation}
 where $d$ denotes total differentiation with respect to the parameters.
 A notable example  of  $\tau$-function of the Painlev\'e II equation is the Tracy-Widom distribution  \cite{TW}. 
  Such $\tau$- function has the property of being expressed as a Fredholm determinant of the Airy kernel. Many relevant solutions of the Painlev\'e equations that appear in various branches of mathematics  turn out to be expressed as a Fredholm determinant of some integrable operator, as defined by Its, Izergin, Korepin, and Slavnov \cite{IIKS}.
  For example the gap probability distribution in random matrices is  Fredholm determinant with the   sine kernel (Painlev\'e V)  \cite{TW}, the  correlation function   of  stochastic point processes on a one-dimensional lattice originated from  representations of the infinite symmetric group is a Fredholm determinant with hypergeometric kernel (Painlev\'e VI) \cite{BD},\cite{BO}. 
  
  It is natural to inquire whether  all solutions of Painlev\'e equations can be expressed as a Fredholm determinant of some integrable operator.
In a series of recent papers Cafasso, Gavrylenko, Lisovyy \cite{GL}, \cite{CGL}  showed  that the  generic  $\tau$-function  of the  PVI, PV, PIII equation can be expressed as a Fredholm determinant.
The key feature of this construction is to reduce the Riemann-Hilbert problem  (RHP) associated   to the isomonodromic system to a RHP on the circle  for a jump  matrix $G$.
Then one can define a  Toeplitz operator $T_G=\Pi_+G$ where $G$ is the jump (called \textit{symbol} in the literature of Toeplitz determinants) of the RHP and $\Pi_+$ the  projection operator to analytic functions in the interior of the circle.
It has been shown in \cite{GL},\cite{CGL} that the  Fredholm determinant 
\begin{equation}
\label{tau_intro}
\tau[G]= \det\left( T_{G^{-1}} \circ T_G \right),
\end{equation}   coincides  up to a factor  with the isomonodromic $\tau$ function \eqref{1.1}.
The above $\tau$ function is also called Widom constant  since such quantity was obtained by Widom in the description of the asymptotic behaviour of Toeplitz determinants \cite{Widom1}, \cite{Widom2}, when the size of the matrix tends to infinity, as a refinement of the strong Szeg\"o theorem.


This approach  cannot be directly implemented to  the cases where the RHP is formulated on a contour that is not a circle as is the case for the  Painlev\'e equations PI, PII and  PIV. 
There are several examples of  $\tau$-functions expressed   as a  Fredholm determinant  like the  Tracy-Widom distribution related to  Painlev\'e II \cite{TW},  or like the example obtained in \cite{BGR} related to 
Painlev\'e IV. 
However the generic $\tau$-function  of the Painlev\'e I, II and IV equations does not seem to have a Fredholm determinant representation.
 The main obstacle to develop the procedure implemented in \cite{GL} and \cite{CGL} is   the impossibility to reduce the RHP problem of the Painlev\'e I, II and IV equations to a RHP on the circle.
However it is expected that the generic RHP for these equations could be reduced to a RHP on the line for a jump matrix $G$.
Then one considers the  projection operator $\Pi_+$  to holomorphic functions on the semi-plane  and define the operator  $T_G=\Pi_+G$.
For the case considered  in this manuscript,  this operator    is trace class and  therefore the  Fredholm determinant \eqref{tau_intro}
is well defined.
In this manuscript  we develop this machinery for the Painlev\'e II equation by considering  the Ablowitz-Segur family of solutions for PII as a toy model.
This example  serves as a starting point to obtain the generic Painlev\'e II $\tau$-function.
 \vspace*{0.2cm}

 This paper is structured as follows. We will first setup the machinery to extend the formalism in \cite{GL},\cite{CGL}  to a line contour and show in \ref{theorem:theorem1} that the $\tau$-function can be written as a Widom constant.  Next we show in \ref{proposition:proposition3} that the Widom constantcoincides with the isomonodromic $\tau$-function \eqref{1.1}.
 Finally in \ref{proposition:proposition4} we construct an appropriate basis and study the minor expansion of the Widom constant.

\section{Setup}
Let $J(z,t): i\mathbb{R} \rightarrow SL(2,\mathbb{C}) $ be a smooth matrix function of $z$ depending analytically on the complex parameter $t$ in some domain. We assume that $||J(z)- 1|| = \mathcal{O}(|z|^{-1})$ as $z \rightarrow \pm  i\infty$. We shall refer to $J$ as the \textit{jump matrix}. In association with the data of the contour $(i\mathbb{R})$ and jump matrix $J$ one can introduce two \textit{Riemann Hilbert problems}, also known as \textit{factorization problems}. They consist of two $2\times 2$ matrices $\Theta(z,t)$ and $\Psi(z,t)$ such that:

\begin{itemize}
  \item $\Theta(z,t)$, $\Psi(z,t)$ are analytic in $z \in \mathbb{C}/i\mathbb{R}$ and admit continuous boundary values from the left $(+)$ and right $(-)$ sides of $i\mathbb{R}$.
  \item The boundary values $\Theta_{\pm}$, $\Psi_{\pm}$ satisfy the \textit{jump conditions}
  \begin{equation}
  J(z,t)= \Theta_{-}^{-1}(z,t)\Theta_{+}(z,t) = \Psi_{+}^{-1}(z,t)\Psi_{-}(z,t)  \,\, ;\,\,\, z\in i\mathbb{R} \label{rhp}
\end{equation}
  \item The functions $\Theta(z,t)$ and $\Psi(z,t)$ are normalized at infinity.
  \begin{equation}
  \lim_{z \rightarrow \infty} \Theta(z,t) = I \,;\, \lim_{z \rightarrow \infty} \Psi(z,t) = I
  \end{equation}
  where the limit is intended as limit in any proper subsector of the left/right half-planes.
\end{itemize}  

The two solutions (if they exist) are the two \textit{Birkhoff factorizations}; we stipulate to call $\Theta(z,t)$ the \textit{direct} Riemann-Hilbert problem and $\Psi(z,t)$ the \textit{dual}. The contour $i\mathbb{R}$ divides the complex plane  into the right half (negative side) and the left half (positive side). The space $L^2\left( i\mathbb{R},|dz| \right)\otimes \mathbb{C}^2$ can be split as the direct sum of two closed subspaces ({\it Hardy spaces}):
 $$\mathcal{H} = L^2(i\mathbb{R}, \mathbb{C}^2) = \mathcal{H}_{+} \oplus \mathcal{H}_{-}$$ the functions on $\mathcal{H}$ are all column vectors. The two subspaces consist of (vector valued) functions in $L^2(i\mathbb{R})$ that are boundary values from the left($+$)/right($-$) of analytic functions that tend to 0 at infinity. Notice that this splitting is orthogonal. On these spaces, one can define projection operators $\Pi_{\pm}$ such that 

$$\Pi_{+} : \mathcal{H} \rightarrow \mathcal{H}_{+} \quad ; \quad \Pi_{-}: \mathcal{H} \rightarrow \mathcal{H}_{-}$$
explicitly, $\Pi_{\pm}$ are just the Cauchy transforms
\begin{align}
\Pi_{+} f(z) = \int_{i\mathbb{R}} \frac{dw}{2\pi i}  \frac{f(w)}{w-z}  \qquad \Re z<0 \\ \label{2.3}
\Pi_{-} f(z) = -\int_{i\mathbb{R}}\frac{dw}{2\pi i} \frac{f(w)}{w-z}   \qquad \Re z>0
\end{align}
 with the equality $\Pi_{+} + \Pi_{-} \equiv \mathbb{I}$. To define the $\tau$-function, we  first  define the operator $T_{J^{-1}}:\mathcal{H} \rightarrow \mathcal{H}_{+}$ for the symbol $J^{-1}$ by multiplication followed by projection
\begin{equation}
T_{J^{-1}} (f) = \Pi_{+}( J^{-1} f) \label{2.4}
\end{equation}
$T_{J}$ is similarly defined 
\begin{equation}
T_{J} (f) = \Pi_{+} \left( J f \right).
\end{equation}

With the operators $T_J$ and $T_{J^{-1}}$, we define the $\tau$-function along the same lines as the Widom constant.  
\begin{definition}
We define the Widom constant with respect to the operators $T_{J}$, $T_{J^{-1}}$
\begin{equation}
\tau [J] = \det \left( T_{J^{-1}} \circ T_{J} \right) \label{2.5}
\end{equation}
\end{definition}

It is a Fredholm determinant for the case of Ablowitz-Segur solutions as will be shown later. 

\begin{proposition}\label{prop1}
 $\tau[J]$ as defined in \eqref{2.5} admits an equivalent representation as the determinant 
\begin{equation}
\tau[J] = \det\textsubscript{$\mathcal{H}$}[1 + U] \label{FD}
\end{equation}
where ${\bf 1}$ denotes the identity operator on $\mathcal{H}$, $U: \mathcal{H} \rightarrow \mathcal{H}$ is an operator represented in the splitting $\mathcal{H}_{\pm}$ as
$U = \left(  \begin{array}{cc} 
0 & a \\
b & 0 
\end{array} \right) 
$ and $a:\mathcal{H}_{-} \rightarrow \mathcal{H}_{+}$; $ b: \mathcal{H}_{+} \rightarrow \mathcal{H}_{-}$ are given by,
\begin{equation*}
a = \Theta_{+}  \Pi_{+} \Theta_{+}^{-1} -\Pi_{+} \quad; \quad b = \Pi_{-}- \Theta_{-}  \Pi_{-} \Theta_{-}^{-1}. 
\end{equation*} 

\end{proposition}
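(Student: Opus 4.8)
The plan is to pass from the multiplicative form \eqref{2.5} to the additive form \eqref{FD} in three movements: reduce $\tau[J]$ to a single Fredholm determinant on the positive Hardy space, identify the off-diagonal block structure of $U$, and then match the two expressions by a conjugation built from the Birkhoff factor $\Theta$. First I would work out $T_{J^{-1}}\circ T_J$ after restriction to $\mathcal{H}_+$. Writing $\Pi_+ = \mathbb{I}-\Pi_-$, using $J^{-1}J=\mathbb{I}$ and $\Pi_+ f = f$ for $f\in\mathcal{H}_+$, one finds
\[
T_{J^{-1}}\circ T_J\big|_{\mathcal{H}_+} = \Pi_+ J^{-1}\Pi_+ J\big|_{\mathcal{H}_+} = \mathbb{I}_{\mathcal{H}_+} - \Pi_+ J^{-1}\Pi_- J\big|_{\mathcal{H}_+},
\]
so that $\tau[J] = \det_{\mathcal{H}_+}\!\big(\mathbb{I}-\Pi_+ J^{-1}\Pi_- J\big)$. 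This is the only step in which the definition \eqref{2.5} is used.

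Second, I would establish the block structure of $U$. The operators $P:=\Theta_+\Pi_+\Theta_+^{-1}$ and $Q:=\Theta_-\Pi_-\Theta_-^{-1}$ are idempotent, and the essential point is that, since $\Theta_+^{\pm1}$ are boundary values of functions analytic in the left half-plane (so multiplication by them preserves $\mathcal{H}_+$) while $\Theta_-^{\pm1}$ preserve $\mathcal{H}_-$, the operator $a=P-\Pi_+$ annihilates $\mathcal{H}_+$ and maps $\mathcal{H}_-$ into $\mathcal{H}_+$, and dually $b=\Pi_--Q$ annihilates $\mathcal{H}_-$ and maps $\mathcal{H}_+$ into $\mathcal{H}_-$. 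This is precisely the claim that $U$ is off-diagonal with the stated entries, and it licenses the block (Schur-complement) identity
\[
\det_{\mathcal{H}}(\mathbb{I}+U) = \det_{\mathcal{H}}\begin{pmatrix}\mathbb{I}_{\mathcal{H}_+} & a \\ b & \mathbb{I}_{\mathcal{H}_-}\end{pmatrix} = \det_{\mathcal{H}_+}(\mathbb{I}-ab).
\]

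Third, I would compute $ab$ on $\mathcal{H}_+$ directly. For $f\in\mathcal{H}_+$ one has $bf = -\Theta_-\Pi_-(\Theta_-^{-1}f)\in\mathcal{H}_-$, whereupon $a$ acts as $P$, giving $ab\,f = -\Theta_+\Pi_+\!\big(\Theta_+^{-1}\Theta_-\,\Pi_-(\Theta_-^{-1}f)\big)$. Invoking the factorisation $J=\Theta_-^{-1}\Theta_+$, hence $\Theta_+^{-1}\Theta_-=J^{-1}$, and conjugating by multiplication by $\Theta_+$—which is invertible on $\mathcal{H}_+$—identifies $ab$ with $\Pi_+ J^{-1}\Pi_- J\big|_{\mathcal{H}_+}$, the signs being dictated by the orientation conventions in \eqref{2.3}. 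Since conjugation preserves the Fredholm determinant, $\det_{\mathcal{H}_+}(\mathbb{I}-ab)=\det_{\mathcal{H}_+}(\mathbb{I}-\Pi_+ J^{-1}\Pi_- J)=\tau[J]$, which closes the argument.

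The main obstacle is not the algebra but its functional-analytic justification. One must verify that $a$ and $b$ (equivalently $U$, and $\Pi_+ J^{-1}\Pi_- J$) are trace class, so that every determinant above is a genuine Fredholm determinant and both the Schur-complement reduction and the conjugation invariance are legitimate in infinite dimensions. This is where the hypotheses on $J$—smoothness on $i\mathbb{R}$ and the decay $\|J-\mathbb{I}\|=\mathcal{O}(|z|^{-1})$ as $z\to\pm i\infty$—must be fed into the solvability and decay of the Birkhoff factors $\Theta_\pm$, and thence into the off-diagonal mapping and trace-class properties. I expect the cleanest route is to bound $a$ and $b$ through the Cauchy-operator estimates for $\Pi_\pm$ rather than to estimate $U$ directly.
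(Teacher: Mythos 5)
Your route is the same as the paper's: reduce $\tau[J]$ to a determinant on $\mathcal{H}_+$, show $U$ is off-diagonal in the Hardy splitting, and match the two expressions via the Schur complement and conjugation by $\Theta_+$. Your first two movements are correct. The gap is in the third, and it is a sign that cannot be ``dictated by the orientation conventions in \eqref{2.3}'': those conventions are already frozen by the explicit Cauchy transforms together with $\Pi_++\Pi_-=\mathbb{I}$, so there is no freedom left to invoke. Write $P=\Theta_+\Pi_+\Theta_+^{-1}$ and $Q=\Theta_-\Pi_-\Theta_-^{-1}$. Your own computation gives, for $f\in\mathcal{H}_+$,
\[
ab\,f=-\Theta_+\Pi_+\left(\Theta_+^{-1}\Theta_-\,\Pi_-\left(\Theta_-^{-1}f\right)\right)=-PQ\,f ,
\]
and conjugating by $\Theta_+$ identifies $ab$ with $-\Pi_+J^{-1}\Pi_-J\big|_{\mathcal{H}_+}$, i.e.\ \emph{minus} the operator produced by your first step. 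Consequently
\[
\det\nolimits_{\mathcal{H}}(\mathbb{I}+U)=\det\nolimits_{\mathcal{H}_+}(\mathbb{I}-ab)=\det\nolimits_{\mathcal{H}_+}\left(\mathbb{I}+\Pi_+J^{-1}\Pi_-J\right),
\qquad
\tau[J]=\det\nolimits_{\mathcal{H}_+}\left(\mathbb{I}-\Pi_+J^{-1}\Pi_-J\right),
\]
which are different functions in general. Nor can the mismatch be absorbed into the overall sign of $U$: for off-diagonal $U$ one has $\det(\mathbb{I}+U)=\det(\mathbb{I}-U)$ (conjugate by $\mathrm{diag}(\mathbb{I},-\mathbb{I})$), so only the \emph{relative} sign of $a$ and $b$ matters. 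The identity you want holds verbatim once one entry is flipped, e.g.\ $b=\Theta_-\Pi_-\Theta_-^{-1}-\Pi_-$ with $a$ as stated.

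In fairness, the paper's own proof contains exactly the same slip: in its chain of equalities the operator $(\Theta_+\Pi_+\Theta_+^{-1})(\Theta_-\Pi_-\Theta_-^{-1})$ is silently replaced by $(\Theta_+\Pi_+\Theta_+^{-1}-\Pi_+)(\Pi_--\Theta_-\Pi_-\Theta_-^{-1})$, whereas on $\mathcal{H}_+$ the latter equals \emph{minus} the former (the cross terms vanish, exactly as in your computation). So you have reproduced the paper's argument, defect included; the difference is that your algebra actually exposes the sign, and the appeal to orientation is the one unjustified sentence. The sign is not cosmetic downstream: in the Ablowitz--Segur application it is the difference between $\det(\mathbb{I}-\kappa^2K_{Ai})$ and $\det(\mathbb{I}+\kappa^2K_{Ai})$, i.e.\ between two genuinely different solutions. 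Your closing trace-class caveat is legitimate but is equally deferred by the paper: Proposition~\ref{prop1} is treated there as purely algebraic, and the analytic bounds (Hilbert--Schmidt entries of $a$ and $b$, hence trace-class $ab$) are only verified later, for the Ablowitz--Segur data, in the proof of Theorem~\ref{theorem:theorem1}.
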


\begin{proof}

Substituting \eqref{2.4} in \eqref{2.5} and manipulating the terms gives the familiar form of the determinant representation of $\tau$-function in \cite{CGL}. 
\begin{eqnarray}
\tau [J] = \det\textsubscript{$\mathcal{H}_{+}$}[ T_{J^{-1}} \circ T_{J} ] = \det\textsubscript{$\mathcal{H}_{+}$} [\Pi_{+} J^{-1} \Pi_{+} J] \nonumber \\
=   \det\textsubscript{$\mathcal{
 H}_{+}$}[\Pi_{+} \Theta_{+}^{-1} \Theta_{-} \Pi_{+} \Theta_{-}^{-1} \Theta_{+} ] \nonumber \\
= \det\textsubscript{$\mathcal{
 H}_{+}$}[\Theta_{+}\Pi_{+} \Theta_{+}^{-1} \Theta_{-} \Pi_{+} \Theta_{-}^{-1}  ] \nonumber \\
 =\det\textsubscript{$\mathcal{
 H}_{+}$}[\Theta_{+}\Pi_{+} \Theta_{+}^{-1} \Theta_{-} (1- \Pi_{-}) \Theta_{-}^{-1}  ] \nonumber \\
 = \det\textsubscript{$\mathcal{
 H}_{+}$}[ 1- (\Theta_{+}\Pi_{+} \Theta_{+}^{-1}) ( \Theta_{-} \Pi_{-} \Theta_{-}^{-1})  ] \nonumber \\
 = \det\textsubscript{$\mathcal{
 H}_{+}$}[ 1- (\Theta_{+}\Pi_{+} \Theta_{+}^{-1} - \Pi_{+}) (\Pi_{-}- \Theta_{-} \Pi_{-} \Theta_{-}^{-1})  ] \nonumber \\
= \det\textsubscript{$\mathcal{H}$}[1 + U]  
\end{eqnarray}
where 
\begin{equation}
U = \left(  \begin{array}{cc} 
0 & a \\
b & 0 
\end{array} \right) \label{U}
\end{equation}
and 
\begin{equation}
a = \Theta_{+}  \Pi_{+} \Theta_{+}^{-1} -\Pi_{+} \quad; \quad b = \Pi_{-}-\Theta_{-}  \Pi_{-} \Theta_{-}^{-1}. \label{a_b}
\end{equation} 
Notice that $$a: \mathcal{H}_{-} \rightarrow \mathcal{H}_{+} \quad ; \quad  b: \mathcal{H}_{+} \rightarrow \mathcal{H}_{-}.$$ 
\end{proof}
Now, one can repeat the same computation as above in terms of the dual RHP $\Psi_{\pm}$ and get the following. 
 \begin{equation}
a = \Psi_{+}  \Pi_{+} \Psi_{+}^{-1} -\Pi_{+} \quad b = \Pi_{-}- \Psi_{-}  \Pi_{-} \Psi_{-}^{-1}. \label{2.10}
\end{equation}
\section{Toy model: Ablowitz-Segur solution} 

The Ablowitz-Segur family \cite{AS}
 of solutions  of  the  Painlev\'e II   equation \begin{equation}
\label{PII}
\partial_{s}^2 u = su + 2u^{3} 
\end{equation}
are specified uniquely by the boundary condition 
\begin{equation}
 \label{3.1}
u(s) \simeq \kappa Ai(s); \quad s\rightarrow +\infty, \quad \kappa\in \mathbb{C}\,.
\end{equation}
It is well known that the solution of a Painlev\'e equation can be obtained by solving a Riemann-Hilbert problem for a matrix valued  function  \cite{BIK} 
$\Gamma(z,s)$ analytic in the complex $z$ domain $\mathbb{C}$ minus some contours.  \begin{figure}[H]
\centering
\includegraphics{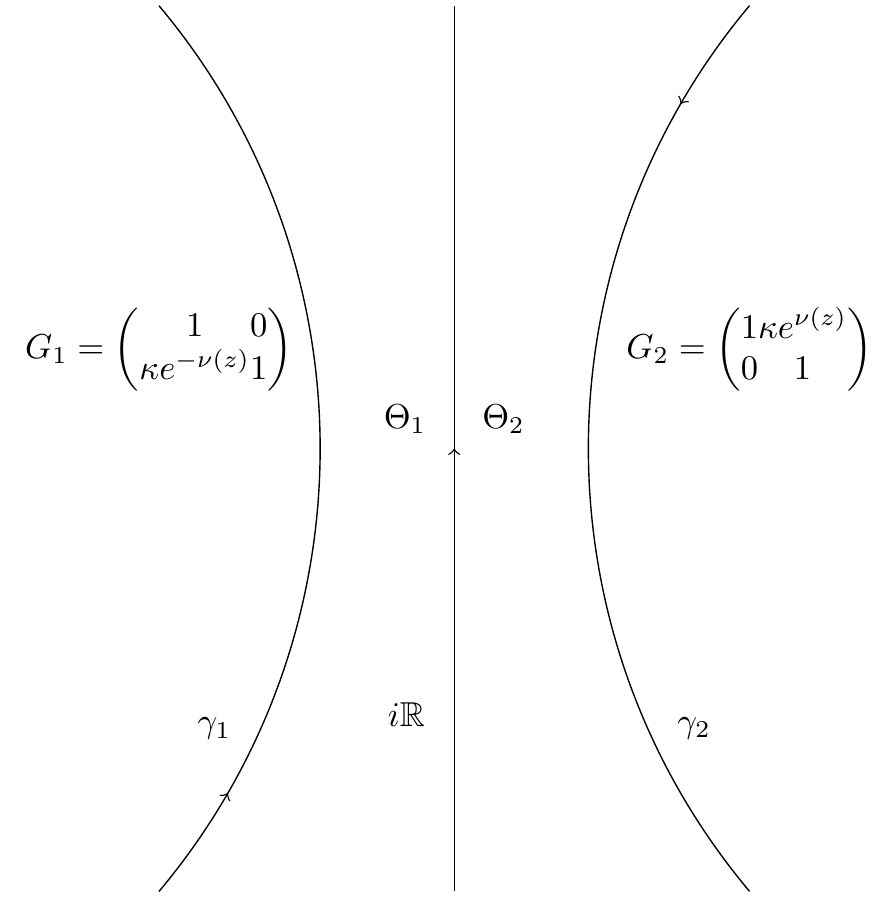}
\caption{Contour \label{Contour 1}}
\end{figure}
For the Ablowitz-Segur family of solutions,  the contours  are shown in figure~\ref{Contour 1},
 and the  Riemann-Hilbert problem satisfied by $\Gamma(z,s)$ is as follows
 \begin{itemize}
 \item  $\Gamma(z,s)$ is  is analytic in $\mathbb{C} \backslash \Sigma$ with  $\Sigma \equiv \gamma_{1}\cup \gamma_{2}$;
 \item the  boundary values $\Gamma_{\pm}(z,s)$ on the oriented contours $\gamma_1$ and $\gamma_2$  satisfying the following jump conditions
\begin{eqnarray}
 \quad \Gamma_{+}(z,s) = G_{1}(z,s)\Gamma_{-}(z,s)  \quad z\in \gamma_{1} \label{rhp_gamma1}\\
\quad \Gamma_{+}(z,s) = G_{2}(z,s)\Gamma_{-}(z,s) \quad z \in \gamma_{2} \label{rhp_gamma2} 
\end{eqnarray}
\item  the asymptotic behaviour at infinity is specified by 
\begin{eqnarray}
\Gamma(z,s) =  {\bf 1} + \dfrac{\Gamma^{(1)}(s)}{z}+O(z^{-2}),\quad \mbox{as} \;\;|z|\to\infty .
\end{eqnarray}
\end{itemize}
The solution $\Gamma(z,s)$ of the  above  Riemann Hilbert problem  (when it exists)   determines the Painlev\'e transcendent  via the relation
\begin{equation}
u(s) = 2 \Gamma^{(1)}_{12}(s). 
\end{equation}
From the above data, the isomonodromic $\tau$-function is defined by \cite{JMU}
\begin{equation}
\partial_{s} \log \tau_{JMU}(s) = - Res_{z=\infty } \Tr \left[ \Gamma(z,s)' \Gamma^{-1}(z,s) \left(z\sigma_{3} \right) \right] \label{JMU_PII}
\end{equation}
and we have  the relation
\[
u^2(s)=-\dfrac{d^2}{ds^2} \log \tau_{JMU}(s).
\]
Instead of solving the Riemann Hilbert problem for $\Gamma$ we factorize it into two separate Riemann-Hilbert problems, one for the function $\Theta_1(z,s)$  analytic in  $z\in \mathbb{C} \backslash \gamma_1$ and 
one for the function $\Theta_2(z,s)$  analytic in  $z\in \mathbb{C} \backslash \gamma_2$  with boundary values
\begin{eqnarray}
\Theta_{1+} (z,s) = G_{1}(z,s)\Theta_{1-}(z,s) \quad z\in \gamma_{1} \label{rhp_theta1} \\
\quad \Theta_{2+} (z,s) = G_{2}(z,s) \Theta_{2-} (z,s) \quad z\in \gamma_{2} \label{rhp_theta2} 
\end{eqnarray}
and 
\begin{eqnarray}
\Theta_{1}(z,s) =  {\bf 1} + \mathcal{O}(z^{-1}) \,,\quad \Theta_{2}(z,s) =  {\bf 1} + \mathcal{O}(z^{-1}) \quad  \mbox{as $z\rightarrow \infty$}.
\end{eqnarray}

It is straightforward to solve the Riemann Hilbert problems \eqref{rhp_theta1}, \eqref{rhp_theta2}. The solutions are given by the Cauchy transforms of the respective jumps $G_{1}$, $G_{2}$
\begin{eqnarray}
\Theta_{1} (z, s)= \left[ \begin{array}{ccc} 
1 & & 0 \\
\kappa \int_{\gamma_1} \frac{e^{-\nu(\lambda,s)}}{\lambda - z} \frac{d\lambda}{2\pi i} &  & 1   \end{array} \right] \label{3.3} \\
\Theta_{2}(z,s) = \left[  \begin{array}{ccc} 
1 & & \kappa \int_{\gamma_2} \frac{e^{\nu(\lambda,s)}}{\lambda - z} \frac{d\lambda}{2\pi i} \\
0 & & 1 \end{array} \right]. \label{3.4}
\end{eqnarray}
Next we define the matrix valued function $\Theta(z,s)$ such that
\begin{equation}
\Theta(z,s) = \begin{cases}
\Theta_{2}(z,s) \quad \Re z<0 \\
\Theta_{1}(z,s) \quad \Re z\geq 0 
\end{cases} \label{def_theta}
\end{equation} 
Clearly the matrix function  $\Theta(z,s)$ is analytic in $\mathbb{C}\backslash  i\mathbb{R} $  and the boundary values $\Theta_{\pm}$ on $i\mathbb{R} $ satisfy   the  jump condition
{\small \begin{equation}
\Theta_{-}(z,s)^{-1} \Theta_{+}(z,s)= J(z,s)= \left[\begin{array}{ccc}
1 & & \kappa \int_{\gamma_2} \frac{e^{\nu(\lambda,s)}}{\lambda - z} \frac{d\lambda}{2\pi i} \\
-\kappa \int_{\gamma_1} \frac{e^{-\nu(\lambda,s)}}{\lambda - z} \frac{d\lambda}{2\pi i} & & 1 
\end{array} \right] \, . \label{rhp_theta}
\end{equation}}
and 
\[
 \lim_{z\rightarrow \infty} \Theta(z,s) = {\bf 1}. 
 \]

\subsection{Computing the $\tau$-function}
In this section we want to make sense of the quantity $\det\left(T_{J^{-1}} \circ T_{J} \right)$   introduced in \eqref{2.5}   when the matrix $J$ is as in  \eqref{rhp_theta}.
In Proposition~\ref{prop1} we show that 
\begin{equation}
\tau[J] =\det\left(T_{J^{-1}} \circ T_{J} \right)= \det\textsubscript{$\mathcal{H}$} \left[ {\bf 1} + U \right] \label{FD_AS}
\end{equation}
where ${\bf 1}$ denotes the identity operator on $\mathcal{H}$,  and 
$U = \left(  \begin{array}{cc} 
0 & a \\
b & 0 
\end{array} \right) $  with  $a:\mathcal{H}_{-} \rightarrow \mathcal{H}_{+}$; $ b: \mathcal{H}_{+} \rightarrow \mathcal{H}_{-}$  given by,
\begin{equation*}
a = \Theta_{+}  \Pi_{+} \Theta_{+}^{-1} -\Pi_{+} \quad; \quad b = \Pi_{-}- \Theta_{-}  \Pi_{-} \Theta_{-}^{-1} 
\end{equation*}
with $ \Theta_{\pm}$ the boundary values of the matrix $\Theta$  defined in \eqref{def_theta}.

 We want to show that the  quantity  \eqref{FD_AS}  is a Fredholm determinant and coincides with the $\tau$-function defined in  \cite{B1}.


We remind the reader of the result in \cite{B1} where  the $\tau$-function of  the Ablowitz-Segur family of solutions for Painlev\'e II is given by the following  Fredholm determinant\footnote[1]{$\gamma_{\pm}$ are $\gamma_{1,2}$ rotated by $\pi/2$ and this is also the source of the factor of $i$ in the exponential in \cite{B1}.}

\begin{equation}
\tau(s) = \det \left[ Id_{L^2(\gamma_{+} \cup \gamma_{-})} - \kappa \left[ \begin{array}{cc}  0 & \mathcal{F} \\ \mathcal{G} & 0  \end{array}   \right] \right] = \det \left[ Id_{L^2(\gamma_{+})} - \kappa^2 \mathcal{F} \circ \mathcal{G}   \right] \label{tau_cite3}
\end{equation}
with 
\begin{equation}
\mathcal{F} : L^2(\gamma_{-}) \rightarrow L^2(\gamma_{+}) \quad \mathcal{G} : L^2(\gamma_{+}) \rightarrow L^2(\gamma_{-}) 
\end{equation}

and 
\begin{eqnarray} 
(\mathcal{F}g)(z) = e^{-\frac{i}{2} \nu(z,s)} \int_{\mathbb{R}-ic} \frac{dw}{2\pi i} \frac{e^{\frac{i}{2} \nu(w,s)} g(w)}{w-z}  \\
(\mathcal{G}g)(z) = e^{\frac{i}{2} \nu(z,s)} \int_{\mathbb{R}+ic} \frac{dw}{2\pi i} \frac{e^{-\frac{i}{2} \nu(w,s)} g(w)}{w-z} 
\end{eqnarray}

 \begin{theorem}\label{theorem:theorem1}
 
 The $\tau$-function  \eqref{tau_cite3} of the  Ablowitz-Segur family of solutions of the Painlev\'e II equation is the Widom constant defined in  \eqref{FD_AS}.
\end{theorem}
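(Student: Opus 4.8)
The plan is to transform the Widom constant $\det_{\mathcal H}[\mathbf 1+U]$ into the integrable-operator determinant of \eqref{tau_cite3} through three reductions: a block reduction of the determinant, an explicit evaluation of the entries $a,b$ using the triangular structure of $\Theta_{\pm}$, and a contour collapse transporting the Cauchy integrals from $i\mathbb R$ onto $\gamma_{1},\gamma_{2}$.

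First I would exploit the off-diagonal block structure of $U$. Since $U=\left(\begin{smallmatrix}0&a\\ b&0\end{smallmatrix}\right)$ with $a:\mathcal H_-\to\mathcal H_+$, $b:\mathcal H_+\to\mathcal H_-$ trace class, the factorization $\left(\begin{smallmatrix}1&a\\ b&1\end{smallmatrix}\right)=\left(\begin{smallmatrix}1&a\\ 0&1\end{smallmatrix}\right)\left(\begin{smallmatrix}1-ab&0\\ b&1\end{smallmatrix}\right)$ gives $\tau[J]=\det_{\mathcal H}[\mathbf 1+U]=\det_{\mathcal H_+}[\mathbf 1-ab]$. Next I would compute $a$ and $b$ explicitly. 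Because $\Theta_+$ is the boundary value of the unipotent upper-triangular $\Theta_2$ and $\Theta_-$ that of the unipotent lower-triangular $\Theta_1$, writing $\Theta_+=\mathbf 1+N_2$, $\Theta_-=\mathbf 1+N_1$ with $N_1^2=N_2^2=0$ and using $\Pi_{\pm}f=0$ on the opposite Hardy space collapses \eqref{a_b} to $af=\bigl(-\Pi_+(g_2 f_2),\,0\bigr)^{T}$ and $bf=\bigl(0,\,\Pi_-(g_1 f_1)\bigr)^{T}$, where $g_1,g_2$ are the scalar Cauchy transforms of \eqref{3.3}–\eqref{3.4}. Hence $ab$ acts nontrivially on a single $\mathbb C^2$-component and
\[
\tau[J]=\det\bigl[\mathbf 1+\Pi_+\,g_2\,\Pi_-\,g_1\bigr],
\]
a scalar Fredholm determinant on $\mathcal H_+$; the two factors of $\kappa$ concealed in $g_1,g_2$ already reproduce the $\kappa^2$ of \eqref{tau_cite3}.

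The heart of the argument, and the step I expect to be the main obstacle, is the collapse of the $i\mathbb R$-integrals onto $\gamma_1,\gamma_2$. Inserting the integral representations of $g_1,g_2$, interchanging the order of integration (to be justified by $\|J-\mathbf 1\|=O(|z|^{-1})$ and trace-class estimates), and applying the partial-fraction identity $\tfrac1{(w-z)(\lambda-w)}=\tfrac1{\lambda-z}\bigl(\tfrac1{w-z}+\tfrac1{\lambda-w}\bigr)$, the inner $i\mathbb R$-integral is evaluated by the Hardy-space property of $\Pi_{\pm}$. The definition \eqref{def_theta} forces $\gamma_2\subset\{\Re z>0\}$ and $\gamma_1\subset\{\Re z<0\}$, precisely the configuration in which the surviving boundary term is the analytic continuation of the density to $\gamma_{1,2}$; this turns $\Pi_+\,g_2\,\Pi_-\,g_1$ into the double-contour integrable operator with kernel $\kappa^2 e^{\nu(\lambda)}\!\int_{\gamma_1}\frac{e^{-\nu(\mu)}(\cdots)}{\mu-\lambda}\frac{d\mu}{2\pi i}$ supported on $\gamma_2$. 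Getting the orientations and the side of each contour right, so that no pole is crossed and the correct Cauchy projector survives, is the delicate point.

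Finally I would pass to a single contour and match normalizations. Using $\det(\mathbf 1+AB)=\det(\mathbf 1+BA)$ transfers the determinant to $L^2(\gamma_2)$ (equivalently $\gamma_+$), after which the weight $e^{\nu}$ is split symmetrically as $e^{\nu/2}\cdot e^{\nu/2}$ and one half absorbed into each contour by a conjugation, an operation leaving the determinant unchanged. Rotating the contours by $\pi/2$ identifies $\gamma_1,\gamma_2$ with the lines $\gamma_{\pm}=\mathbb R\pm ic$ and converts $e^{\pm\nu}$ into the symmetric weights $e^{\pm\frac{i}{2}\nu}$ of \eqref{tau_cite3} (the source of the factor $i$ noted in the footnote). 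Tracking the sign produced by the rotation then turns $\mathbf 1+\Pi_+ g_2\Pi_- g_1$ into $Id_{L^2(\gamma_+)}-\kappa^2\,\mathcal F\circ\mathcal G$, which is exactly the Ablowitz–Segur $\tau$-function of \cite{B1}, completing the identification.
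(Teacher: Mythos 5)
Your proposal is correct and follows essentially the same route as the paper's proof: exploiting the nilpotent triangular structure of $\Theta_{1},\Theta_{2}$ to reduce $a,b$ to scalar Cauchy-type operators with weights $e^{\pm\nu}$, collapsing the $i\mathbb{R}$-integrals onto (deformations of) $\gamma_{1},\gamma_{2}$ via the partial-fraction/Hardy-space argument, and matching the result with the kernel of \eqref{tau_cite3} after a determinant-preserving conjugation and the $\pi/2$ rotation. The only cosmetic difference is that you pass immediately to $\det_{\mathcal{H}_{+}}[\mathbf{1}-ab]$ by the block factorization, whereas the paper keeps the $2\times 2$ block form and identifies it with the first determinant in \eqref{tau_cite3} (both forms appear there), and you are somewhat more explicit about the final weight-splitting and rotation step, which the paper relegates to a footnote.
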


\begin{proof}

The  Widom constant can be obtained from \eqref{FD_AS} by computing   the operators $a$, $b$ explicity. Let  $f(z)\in \mathcal{H}_{-}$, $h(z)\in \mathcal{H}_{+}$  be  vector valued
functions
$$f \equiv \left( \begin{array}{c}
f_1 \\ f_2
\end{array} \right); \,\, h \equiv \left( \begin{array}{c}
h_1 \\ h_2
\end{array} \right), $$
then 
\begin{eqnarray}
af(z) =  \int_{i\mathbb{R}}\frac{dw}{2\pi i} \frac{ \Theta_{2}(z) \Theta_{2}^{-1} (w)-1 }{w-z} f(w)  \label{3.6} \\
bh(z) =  \int_{i\mathbb{R}} \frac{dw}{2\pi i} \frac{1 -\Theta_{1}(z) \Theta_{1}^{-1} (w)  }{w-z} h(w),  \label{3.5}
\end{eqnarray}
with $\Theta_1$ and $\Theta_2$ are as in \eqref{3.4}.
We begin by computing \eqref{3.6}
\begin{eqnarray}
 \Theta_{2}(z) \Theta_{2}^{-1}(w) -1 = \left[ \begin{array}{cccc} 
0 & & & \kappa \int_{\gamma_2}\left(  \frac{e^{\nu(\lambda,s)}}{\lambda - z} - \frac{e^{\nu(\lambda,s)}}{\lambda - w} \right) \frac{d\lambda}{2\pi i} \\
0 & & & 0   \end{array} \right]\label{3.10}
\end{eqnarray}
substituting \eqref{3.10} in \eqref{3.5} and focusing on the only non-zero entry $a_{12}$,

\begin{equation}
a_{12} f_{2}(z) = -\kappa \int_{i \mathbb{R}}\frac{dw}{2\pi i} \int_{\gamma_{2}} \frac{d\lambda}{2\pi i}\frac{e^{\nu(\lambda,s)}}{(\lambda - z)(\lambda - w)} f_{2}(w)   \label{3.11}
\end{equation}
integrating over $\lambda$,
\begin{equation}
a_{12} f_{2}(z) = -\kappa \int_{i \mathbb{R} - \epsilon}\frac{dw}{2\pi i}  \frac{e^{\nu(w,s)}}{w- z} f_{2}(w)  \label{3.12}
\end{equation}
A similar computation for $b$ gives that the only non-zero entry is $b_{21}$ that reads
\begin{equation}b_{21} h_{1}(z) =  \kappa \int_{i \mathbb{R}} \int_{\gamma_{1}} \frac{e^{-\nu(\lambda,s)}}{(\lambda - z)(\lambda - w)} h_{1}(w) \frac{d\lambda}{2\pi i} \frac{dw}{2\pi i} \label{3.13}
\end{equation}
integrating over $\lambda$
\begin{equation}
b_{21} h_{1}(z) = \kappa \int_{i \mathbb{R}+\epsilon}  \frac{e^{-\nu(w,s)}}{(w - z)} h_{1}(w)  \frac{dw}{2\pi i}  \label{3.14}
\end{equation}
Substituing $a$ and $b$ back in \eqref{FD_AS}, we get the following

\begin{eqnarray}
\tau(s) = \det \left[ Id_{L^2(i\mathbb{R})\otimes \mathbb C^2} -  \left[ \begin{array}{cc}  0 & a \\ b & 0  \end{array}   \right] \right] 
= \det \left[ Id_{L^2(i\mathbb{R})\otimes \mathbb C^2}  -  \left[
 \begin{array}{cccc}  
 0 & 0  & 0 & a_{12} \\ 
 0 & 0  & 0&  0 \\ 
 0 & 0   &0&  0 \\ 
 b_{21} & 0  &0&   0  \end{array}   \right] 
 \right] \\
 =\det \left[ Id_{L^2(i\mathbb{R})}  -  \left[
 \begin{array}{cc}  
 0  & a_{12} \\ 
 b_{21}  &  0 \end{array}   \right] 
 \right]. \label{AS_tau}
\end{eqnarray}

 Further, it is straightforward to see that the operator $\left[ \begin{array}{cc}  0 & a \\ b & 0  \end{array}   \right] $ is trace class. In other words, $a_{12}$ and $b_{21}$ are Hilbert-Schmidt
{\footnotesize \begin{eqnarray}
|a_{12}|^2 = -\kappa^2 \int_{i\mathbb{R} +\epsilon} |dz| \int_{i\mathbb{R}-\epsilon} |dw| \frac{e^{\nu(w,s)+\nu(\bar{w})}}{|w-z|^2} = -\kappa^2 \int_{i\mathbb{R} +\epsilon} |dz| \int_{i\mathbb{R}-\epsilon} |dw| \frac{e^{2Re\nu(w,s)}}{|w-z|^2} < +\infty \label{4.101} \\
|b_{21}|^2 = -\kappa^2 \int_{i\mathbb{R} -\epsilon} |dz| \int_{i\mathbb{R}+\epsilon} |dw| \frac{e^{-\nu(w,s)-\nu(\bar{w})}}{|w-z|^2} = -\kappa^2 \int_{i\mathbb{R} -\epsilon} |dz| \int_{i\mathbb{R}+\epsilon} |dw| \frac{e^{-2Re\nu(w,s)}}{|w-z|^2} < +\infty \label{4.100} 
\end{eqnarray}} 
\eqref{4.100} and \eqref{4.101} are clearly convergent, implying that $a_{21}$ and $b_{12}$ are Hilbert-Schmidt operators. Therefore, the determinant $\det_{\mathcal{H}} [1+U]$ is Fredholm and coincides with the $\tau$-function in \eqref{tau_cite3}.
\end{proof}
 \begin{remark}Further, it is shown in \cite{BC} that 
\begin{equation}
\det\left[ Id_{L^2\left(\gamma_{L}\cup \gamma_{R}\right)} + U \right] = \det\left[ Id_{L^2\left([s,\infty) \right)} - \kappa^2 K_{Ai} \vert_{[s,\infty)}   \right] \label{4.102}
\end{equation}
where $K_{Ai}$ is the Airy kernel, which implies the $\tau$-function \eqref{AS_tau} is the determinant of the Airy Kernel. It is a well known result \cite{TW} that the solution of \eqref{3.1} is related to the Airy Kernel as 
\begin{equation}
u(s)^2 = -\frac{d^2}{ds^2} \log\det \left[ {\bf 1} - \kappa^2 K_{Ai} \vert_{[s,\infty)} \right]
\end{equation}
\end{remark}

\subsection{Relation to the JMU tau-function}
The logarithmic derivative of the Widom constant  \eqref{FD_AS} can be shown to coincide with the logarithmic derivative of the isomonodromic $\tau$-function \eqref{JMU_PII}. To this  end, we begin by defining a matrix valued function $Y(z,s)$ as a ratio of $\Theta(z,s)$ and $\Gamma(z,s)$.
\begin{eqnarray}
Y(z,s) =\begin{cases} \Theta_2^{-1}(z,s) \Gamma(z,s) \quad \Re z<0 \\
\Theta_1^{-1}(z,s)\Gamma(z,s) \quad \Re z \geq 0  \label{def_Y}
\end{cases}
\end{eqnarray}

$Y(z,s)$ has a jump only on $i\mathbb{R}$. Its boundary values satisify the following relation
\begin{equation}
Y_{+}(z,s) = J^{-1}(z,s)Y_{-}(z,s) \quad z\in i\mathbb{R} \label{rhp_Y}
\end{equation}
and $Y(z,s) \rightarrow {\bf 1}$ as $z\rightarrow \infty$.

\textbf{Notation:} $' \equiv \frac{\partial}{\partial z}$ and $ \dot{} \equiv \frac{\partial}{\partial s}$. All functions depend on $z$,$s$ unless stated otherwise.

\begin{proposition}\label{proposition:proposition2}
The logarithmic derivative of Widom constant  in \eqref{FD_AS} is
\begin{equation}
\partial_{s} \log \tau[J] = \int_{i\mathbb{R}} \frac{dz}{2\pi i} \Tr \left\lbrace J^{-1} \dot{J} \left[ -Y_{+}' Y_{+}^{-1} + \Theta_{2}^{-1} \Theta_{2}'  \right]  \right\rbrace\,. \label{int.rep}
\end{equation}
\end{proposition}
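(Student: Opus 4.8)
The plan is to differentiate the Fredholm determinant \eqref{FD_AS} directly. Writing $A=T_{J^{-1}}\circ T_{J}$ and using $\partial_s\log\det A=\Tr\!\left(A^{-1}\dot A\right)$, I would exploit that $T_{G}f=\Pi_+(Gf)$ depends on $s$ only through its symbol, so that $\dot T_{J}=T_{\dot J}$ and $\dot T_{J^{-1}}=T_{\partial_s J^{-1}}=T_{-J^{-1}\dot J J^{-1}}$. Since solvability of the Riemann--Hilbert problem makes $T_J$ and $T_{J^{-1}}$ invertible (both symbols have unit determinant, so the operators have index zero), one has $A^{-1}=T_{J}^{-1}T_{J^{-1}}^{-1}$; expanding $\dot A=T_{\partial_s J^{-1}}T_J+T_{J^{-1}}T_{\dot J}$ and using cyclicity of the trace collapses the expression to
\[
\partial_s\log\tau[J]=\Tr\!\left[T_{J^{-1}}^{-1}T_{\partial_s J^{-1}}\right]+\Tr\!\left[T_{J}^{-1}T_{\dot J}\right].
\]

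Next I would make the two inverse Toeplitz operators explicit through the Birkhoff factorizations. The jump factors as $J=\Theta_-^{-1}\Theta_+$, while the dual relation \eqref{rhp_Y} gives $J^{-1}=Y_+Y_-^{-1}$; both are half-plane factorizations whose factors and inverses are boundary values of functions analytic in $\Re z\lessgtr 0$ and normalized to $\mathbf 1$ at infinity (recall that the left half-plane is the $+$ side). Using these factorizations I would write $T_J^{-1}$ and $T_{J^{-1}}^{-1}$ as projections dressed by $\Theta_\pm$ and $Y_\pm$ respectively, in the same Wiener--Hopf/IIKS spirit already used in Proposition~\ref{prop1}. Feeding these dressed operators into the two traces converts each operator trace into an integral over $i\mathbb{R}$ of the diagonal value of an explicit Cauchy-type kernel.

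The last step is the regrouping. Substituting $\partial_s J^{-1}=-J^{-1}\dot J J^{-1}$ in the first trace and writing $\dot J=J\,(J^{-1}\dot J)$ in the second, the pairs of $J^{\mp 1}$ factors cancel against the symbols already carried by the dressed inverses, leaving the common left factor $J^{-1}\dot J$ in both contributions. I expect the $Y$-dressed trace $\Tr[T_{J^{-1}}^{-1}T_{\partial_s J^{-1}}]$ to contribute the term $-Y_+'Y_+^{-1}$ (read off from the $+$ boundary value), and the $\Theta$-dressed trace $\Tr[T_J^{-1}T_{\dot J}]$ to contribute $\Theta_2^{-1}\Theta_2'$, since on the side $\Re z<0$ one has $\Theta=\Theta_2$ by \eqref{def_theta}. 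Collecting the two would give precisely
\[
\partial_s\log\tau[J]=\int_{i\mathbb{R}}\frac{dz}{2\pi i}\,\Tr\!\left\{J^{-1}\dot J\left[-Y_+'Y_+^{-1}+\Theta_2^{-1}\Theta_2'\right]\right\}.
\]

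The main obstacle is the middle step: identifying the inverse Toeplitz operators with the $\Theta$- and $Y$-dressed projections and then reducing the operator traces to a single contour integral. This demands care in (i) choosing the half-plane factorization compatible with the orientation of $i\mathbb{R}$, so that the correct boundary values and the matrix $\Theta_2$ (rather than $\Theta_1$) appear where claimed; (ii) justifying the passage from an operator trace to the diagonal integral of its kernel, in particular controlling the coincidence limit $w\to z$ of the Cauchy kernel that turns a difference quotient into the logarithmic derivative of the dressing matrix; and (iii) verifying the absolute convergence that legitimizes Fubini and the interchange of $\partial_s$ with the trace. Point (iii) is already guaranteed by the Hilbert--Schmidt bounds \eqref{4.100}--\eqref{4.101} and the trace-class property established in Theorem~\ref{theorem:theorem1}, so the genuine work is concentrated in (i)--(ii).
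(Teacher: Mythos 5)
Your proposal is correct and follows essentially the same route as the paper: the paper likewise differentiates $\det\left[T_{J^{-1}}\circ T_{J}\right]$ through the trace of the logarithmic derivative and inverts the two Toeplitz operators by the dressed projections $T_{J^{-1}}^{-1}=Y_{-}\Pi_{+}Y_{+}^{-1}$ and $T_{J}^{-1}=\Theta_{+}^{-1}\Pi_{+}\Theta_{-}$, which is exactly the Wiener--Hopf step you deferred as the ``main obstacle.'' The paper's remaining manipulations (using $J^{-1}Y_{-}=Y_{+}$, $\Theta_{-}\,\partial_{s}J=\Theta_{+}J^{-1}\partial_{s}J$, the projection identity that kills the $\Pi_{+}\Theta_{-}\Pi_{-}$ term, and then reading off the diagonal of the dressed Cauchy kernels) yield precisely the two contributions you predicted, $-Y_{+}'Y_{+}^{-1}$ from the $Y$-dressed trace and $\Theta_{+}^{-1}\Theta_{+}'=\Theta_{2}^{-1}\Theta_{2}'$ from the $\Theta$-dressed one.
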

\begin{proof}
Begin with the Fredholm determinant \footnote[2]{This computation follows from Theorem 2.3 in \cite{CGL}. The difference being the choice of factorisation.}
\begin{equation}
\tau[J] = \det \left[  T_{J^{-1}} \circ T_{J} \right]= \det\left[PQ\right]
\end{equation}
where $P=T_{J^{-1}} = \Pi_{+} J^{-1}$ and $Q= T_{J} = \Pi_{+}J$. Inverses are $P^{-1} = Y_{-} \Pi_{+} Y_{+}^{-1}$ and $Q^{-1} = \Theta_{+}^{-1} \Pi_{+} \Theta_{-}$. Computing the logarithmic derivative
\begin{align}
\partial_{t} \log \det \left[ PQ \right] = \Tr \left[ \partial_{t} P P^{-1} + Q^{-1} \partial_{t} Q \right] \nonumber\\
= \Tr \left[ -\Pi_{+} J^{-1} \partial_{t} J J^{-1} Y_{-} \Pi_{+} Y_{+}^{-1} + \Theta_{+}^{-1} \Pi_{+} \Theta_{-} \Pi_{+ } \partial_{t} J \right] \nonumber\\
=\Tr \left[ -\Pi_{+} J^{-1} \partial_{t} J \,\, Y_{+} Y_{-}^{-1} Y_{-} \Pi_{+} Y_{+}^{-1} + \Theta_{+}^{-1} \Pi_{+} \Theta_{-} \left(1-\Pi_{- } \right)\partial_{t} J \right] \nonumber\\
=\Tr \left[ -\Pi_{+} J^{-1} \partial_{t} J \,\, Y_{+} \Pi_{+} Y_{+}^{-1} + \Theta_{+}^{-1} \Pi_{+} \Theta_{-} \partial_{t} J \right] \nonumber\\
=\Tr \left[ -\Pi_{+} J^{-1} \partial_{t} J \,\, Y_{+} \Pi_{+} Y_{+}^{-1} + \Theta_{+}^{-1} \Pi_{+} \Theta_{+} J^{-1} \partial_{t} J \right] \nonumber \\
=\int_{i\mathbb{R}} \frac{dz}{2\pi i} \Tr \left\lbrace J^{-1} \partial_{t} J \left[ - \partial_{z} Y_{+} Y_{+}^{-1} + \Theta_{+}^{-1} \partial_{z} \Theta_{+} \right] \right\rbrace
\end{align}
to obtain the last expression  we use the fact that $J^{-1}\partial_{t} J$ is a multiplication operator and only the diagonal parts of $Y_{+} \Pi_{+} Y_{+}^{-1}$ and $\Theta_{+}^{-1} \Pi_{+} \Theta_{+}$ contribute to the expression. 
\end{proof}

Using the above proposition we can identify the Widom constant with the isomonodromic  $\tau$ function.

\begin{proposition} \label{proposition:proposition3}
The logarithmic derivative of the  Widom  constant \eqref{int.rep} coincides exactly with the logarithmic derivative of the (isomonodromic) JMU $\tau$- function \eqref{JMU_PII} for Ablowitz-Segur family of solutions namely :
\begin{equation}
\partial_{s} \log \tau[J] = \partial_{s} \log \tau_{JMU} = - Res_{z=\infty} \Tr \left[ \Gamma' \Gamma^{-1} \left( z \sigma_{3} \right) \right]\,.
\end{equation}
\end{proposition}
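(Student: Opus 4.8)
The plan is to take the integral representation \eqref{int.rep} from Proposition~\ref{proposition:proposition2} and turn it into the residue at infinity \eqref{JMU_PII}, using the factorization $\Gamma=\Theta Y$ of \eqref{def_Y} as the dictionary between the two pictures. Since $\Gamma$ is continuous across $i\mathbb{R}$ (its jumps live only on $\gamma_1\cup\gamma_2$), the boundary values of $Y$ on the contour are $Y_\pm=\Theta_\pm^{-1}\Gamma$. Differentiating $\Theta_+Y_+=\Gamma$ in $z$ yields $Y_+'Y_+^{-1}=\Theta_+^{-1}\Gamma'\Gamma^{-1}\Theta_+-\Theta_+^{-1}\Theta_+'$, so the bracket in \eqref{int.rep} collapses, up to a purely triangular remainder built from $\Theta_+^{-1}\Theta_+'$, to $-\Theta_+^{-1}\Gamma'\Gamma^{-1}\Theta_+$. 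I would carry the remainder along and show at the end that it does not contribute.

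The next step is a trace manipulation. Writing $\Theta_+=\Theta_2$ and using the jump relation $J=\Theta_-^{-1}\Theta_+$, equivalently $\Theta_+J^{-1}=\Theta_-$, cyclicity of the trace gives
\begin{equation}
\Theta_+\,J^{-1}\partial_sJ\,\Theta_+^{-1}=\partial_s\Theta_+\,\Theta_+^{-1}-\partial_s\Theta_-\,\Theta_-^{-1}=:M_+-M_-,
\end{equation}
where $M:=\partial_s\Theta\,\Theta^{-1}$ and $M_\pm$ denote its boundary values on $i\mathbb{R}$. Hence the principal part of the integrand becomes $-\Tr\{(M_+-M_-)\,\Gamma'\Gamma^{-1}\}$, which is exactly the jump across $i\mathbb{R}$ of the scalar function $\Tr[M\,\Gamma'\Gamma^{-1}]$. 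This function is analytic in $\mathbb{C}\setminus(i\mathbb{R}\cup\gamma_1\cup\gamma_2)$ and, because $\Theta\to\mathbf 1$ forces $M=O(z^{-1})$ while $\Gamma'\Gamma^{-1}=O(z^{-2})$, it decays like $|z|^{-3}$ at infinity.

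I would then deform $i\mathbb{R}$ into the two half-planes and collapse it onto the remaining singular set. Because $\Tr[M\Gamma'\Gamma^{-1}]$ decays like $|z|^{-3}$, the residue at $z=\infty$ vanishes, so the whole $i\mathbb{R}$ integral equals the contributions picked up along $\gamma_1\cup\gamma_2$. On $\gamma_j$ both $\Gamma$ and $\Theta_j$ satisfy the same jump $G_j$; propagating $G_j$ through $M$ and $\Gamma'\Gamma^{-1}$ and using the nilpotency of $\partial_s G_j\,G_j^{-1}$ one finds that the jump of $\Tr[M\Gamma'\Gamma^{-1}]$ across $\gamma_j$ reduces to $\Tr[(\Gamma'\Gamma^{-1})_-\,\partial_s G_j\,G_j^{-1}]$. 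This is precisely the Jimbo-Miwa-Ueno one-form written over the jump contours. Finally, since the $s$-dependence of $G_j$ enters only through the phase $\nu$ with $\partial_s\nu$ linear in $z$, the standard isomonodromic identity lets me transport this sum of contour integrals to the irregular singularity, producing the factor $z\sigma_3$ and the residue $-\operatorname{Res}_{z=\infty}\Tr[\Gamma'\Gamma^{-1}(z\sigma_3)]=\partial_s\log\tau_{JMU}$.

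The hard part will be this last identification together with the bookkeeping of the triangular remainder. For the remainder I would run the same deformation: its integrand is strictly triangular, so on the one hand it has no residue at infinity and on the other its jumps across $\gamma_1,\gamma_2$ vanish by the nilpotency already exploited, giving a zero net contribution. The genuinely delicate point is converting $\sum_j\int_{\gamma_j}\Tr[(\Gamma'\Gamma^{-1})_-\,\partial_s G_j\,G_j^{-1}]$ into the single residue at infinity: this is where the isomonodromy of the Ablowitz-Segur family must be used, namely that $\partial_s\Gamma\,\Gamma^{-1}$ extends to a rational function whose only pole is the irregular point at infinity with $\partial_s$-weight $z\sigma_3$. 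Justifying the contour deformations requires the same decay and Hilbert-Schmidt estimates established in the proof of Theorem~\ref{theorem:theorem1}.
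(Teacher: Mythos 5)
Your main line of attack is, step for step, the paper's own proof: the substitution $Y_\pm=\Theta_\pm^{-1}\Gamma$ is \eqref{3.28}, your identity $\Theta_+\,J^{-1}\partial_sJ\,\Theta_+^{-1}=\partial_s\Theta_+\Theta_+^{-1}-\partial_s\Theta_-\Theta_-^{-1}$ is \eqref{3.29}, reading the integrand as the jump of $\Tr\left[\dot\Theta\Theta^{-1}\Gamma'\Gamma^{-1}\right]$ across $i\mathbb{R}$ and deforming onto $\gamma_1\cup\gamma_2$ is \eqref{3.31}--\eqref{3.32}, and the reduction on $\gamma_j$ to $\Tr\left[G_j^{-1}\dot G_j\,\Gamma_-'\Gamma_-^{-1}\right]$ via tracelessness of the triangular cross terms is \eqref{3.33}--\eqref{3.38}. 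Even your final step is the paper's in disguise: rather than quoting isomonodromy (which, note, is a statement about the log-derivative of the full wave function $\Psi=\Gamma\times$ exponential, not about $\partial_s\Gamma\,\Gamma^{-1}$ itself, which is not rational), the paper uses the explicit identity $2G_1^{-1}\dot G_1=-zG_1^{-1}\sigma_3G_1+z\sigma_3$ and one more pass of the jump relation to turn the integrand into the jump of $z\Tr\left[\sigma_3\Gamma'\Gamma^{-1}\right]$, after which Cauchy's theorem produces the residue at infinity. The one place where you genuinely depart from the paper is the treatment of the \emph{triangular remainder}, and that is where your argument fails.

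The remainder you carry is a multiple of $\Theta_+^{-1}\Theta_+'$, which is indeed strictly upper triangular; but what has to vanish is its trace against $J^{-1}\partial_sJ$, and $J^{-1}\partial_sJ$ is \emph{not} triangular. Conjugating by $\Theta_+=\Theta_2$, the remainder's contribution to the integrand is a multiple of $\Tr\left[\left(\dot\Theta_2\Theta_2^{-1}-\dot\Theta_1\Theta_1^{-1}\right)\Theta_2'\Theta_2^{-1}\right]$. Writing $\Theta_1=1+\phi_1E_{21}$, $\Theta_2=1+\phi_2E_{12}$ with the Cauchy transforms of \eqref{3.3}--\eqref{3.4} and $E_{12},E_{21}$ the off-diagonal matrix units, the $\dot\Theta_2$ part is traceless (strictly upper times strictly upper), but the cross term is strictly lower times strictly upper, with trace $\dot\phi_1(z)\,\phi_2'(z)\neq0$: nilpotency does not kill it. Nor can it be deformed away: $\dot\phi_1$ has its cut on $\gamma_1$ (right of $i\mathbb{R}$) and $\phi_2'$ on $\gamma_2$ (left of $i\mathbb{R}$), so the scalar $\dot\phi_1\phi_2'$ is analytic only in the strip between the two contours, and pushing $i\mathbb{R}$ onto either one picks up a genuine (non-nilpotent, scalar) jump proportional to $\dot\nu\,e^{\mp\nu}$. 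Indeed, inserting the contour representations, doing the $z$-integral by residues and using the symmetry $(\lambda,\mu)\mapsto(-\mu,-\lambda)$, one finds that $\int_{i\mathbb{R}}\frac{dz}{2\pi i}\,\dot\phi_1\phi_2'$ equals a nonzero multiple of $\kappa^2K_{Ai}(s,s)$ --- the same order of magnitude as the quantity being computed, not an error term. The proof can only close the way the paper closes it: the $\Theta_2'\Theta_2^{-1}$ term generated by $-Y_+'Y_+^{-1}$ must cancel \emph{identically}, by cyclicity of the trace, against the explicit second term $\Tr\left[J^{-1}\dot J\,\Theta_2^{-1}\Theta_2'\right]$ of \eqref{int.rep} (this is the sentence following \eqref{3.30}; the relative sign required for this cancellation is the one used in \eqref{3.30}, which also flags a sign slip between \eqref{int.rep}/\eqref{3.28} and \eqref{3.30} as printed). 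So replace the plan ``carry the remainder and kill it at the end'' by verifying this exact algebraic cancellation at the outset; once that is done, the rest of your argument goes through and coincides with the paper's.
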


\begin{proof}  We prove the statement by simplifying the expression of $\partial_{s} \log \tau[J]$ in  \eqref{int.rep}:
\begin{eqnarray}
\Tr \left\lbrace J^{-1} \dot{J} \left[  -Y_{+}' Y_{+}^{-1} + \Theta_{2}^{-1} \Theta_{2}' \right]  \right\rbrace \label{3.26}.
\end{eqnarray}
We first perform algebraic manipulation on   $ Y_{+}' Y_{+}^{-1}$ using  \eqref{def_Y}  
\begin{eqnarray}
Y_{+}' Y_{+}^{-1} = \left(\Theta_{2}^{-1} \Gamma \right)' \left(\Gamma^{-1} \Theta_{2} \right) \nonumber \\
= \left(-\Theta_{2}^{-1} \Theta_{2}' \Theta_{2}^{-1} \Gamma + \Theta_{2}^{-1} \Gamma' \right) \left(\Gamma^{-1} \Theta_{2} \right) \nonumber \\
=  \Theta_{2}^{-1} \left(- \Theta_{2}' \Theta_{2}^{-1} +  \Gamma' \Gamma^{-1} \right)\Theta_{2}  \label{3.28}
\end{eqnarray}
and expressing $J$ in terms of $\Theta_{1}$ and $\Theta_{2}$ we obtain 
\begin{eqnarray}
J^{-1}\dot{J}  = (\Theta_{2}^{-1} \Theta_{1}) \dot{(\Theta_{1}^{-1} \Theta_{2})} \nonumber \\
= (\Theta_{2}^{-1} \Theta_{1}) (-\Theta_{1}^{-1}\dot{\Theta}_{1} \Theta_{1}^{-1} \Theta_{2} + \Theta_{1}^{-1} \dot{\Theta}_{2}) \nonumber \\
= \Theta_{2}^{-1}  (-\dot{\Theta}_{1} \Theta_{1}^{-1}  +  \dot{\Theta}_{2} \Theta_{2}^{-1} ) \Theta_{2} \nonumber \\
= \Theta_{2}^{-1}  \Delta(\dot{\Theta} \Theta^{-1}   ) \Theta_{2} \label{3.29}
\end{eqnarray}
where 
\begin{equation}
\Delta (\dot{\Theta} \Theta^{-1}) = \dot{\Theta}_{2} \Theta_{2}^{-1} -\dot{\Theta}_{1} \Theta_{1}^{-1}. \nonumber
\end{equation}
Substituting \eqref{3.28} and \eqref{3.29} in \eqref{3.26} and using cyclicity of trace,
\begin{eqnarray}
\Tr \left\lbrace J^{-1} \dot{J} \,\, Y_{+}' Y_{+}^{-1}  \right\rbrace \ = \Tr \left\lbrace \Delta \left( \dot{\Theta} \Theta^{-1}  \right)  \left( \Gamma' \Gamma^{-1}  + \Theta_{2}' \Theta_{2}^{-1}  \right)  \right\rbrace \label{3.30} 
\end{eqnarray}
The term $  -\Delta \left( \dot{\Theta} \Theta^{-1} \right)\Theta_{2}' \Theta_{2}^{-1} $ is explicit and cancels the term $J^{-1} \dot{J} \Theta_{2}' \Theta_{2}^{-1}  $. After the simplification, \eqref{int.rep} is 
\begin{equation}
-\int_{i\mathbb{R}} \frac{d z}{2\pi i} \Tr \left\lbrace \Delta \left( \dot{\Theta} \Theta^{-1} \right)  \Gamma' \Gamma^{-1}     \right\rbrace \label{3.31}
\end{equation}
since $\Gamma$ has no jump on $i\mathbb{R}$, \eqref{3.31} can be further simplified
\begin{eqnarray}
-\int_{i\mathbb{R}} \frac{d z}{2\pi i} \Tr  \left\lbrace \Delta \left( \dot{\Theta} \Theta^{-1} \right)  \Gamma' \Gamma^{-1}    \right\rbrace = -\int_{i\mathbb{R}} \frac{d z}{2\pi i} \Tr \Delta  \left\lbrace  \dot{\Theta} \Theta^{-1}  \Gamma' \Gamma^{-1}  \right\rbrace \nonumber \\
 = \int_{\Sigma} \frac{dz}{2\pi i} \Tr \Delta  \left\lbrace  \dot{\Theta} \Theta^{-1}  \Gamma' \Gamma^{-1}  \right\rbrace \nonumber \\
 = \int_{\gamma_1} \frac{dz}{2\pi i} \Tr \Delta  \left\lbrace  \dot{\Theta} \Theta^{-1}  \Gamma' \Gamma^{-1}  \right\rbrace + \int_{\gamma_{2}} \frac{dz}{2\pi i}  \Tr \Delta  \left\lbrace \dot{\Theta} \Theta^{-1}  \Gamma' \Gamma^{-1}  \right\rbrace \label{3.32} 
\end{eqnarray}
Let us begin by computing the integral on $\gamma_1$ in \eqref{3.32}
\begin{eqnarray}
 \Tr \Delta\left\lbrace  \dot{\Theta} \Theta^{-1}  \Gamma' \Gamma^{-1} \right\rbrace = \Tr \left\lbrace  \dot{\Theta}_{1+} \Theta^{-1}_{1+}  \Gamma_{+}' \Gamma_{+}^{-1}  - \dot{\Theta}_{1-} \Theta^{-1}_{1-}  \Gamma_{-}' \Gamma_{-}^{-1} \right\rbrace  \label{3.33}
 \end{eqnarray}
computing \eqref{3.33} term by term by substituting \eqref{rhp_gamma1} for $\Gamma_{+}$
\begin{eqnarray}
 \Gamma_{+}'\Gamma_{+}^{-1}  = (G_{1}\Gamma_{-} )' (\Gamma_{-}^{-1} G_{1}^{-1}) \nonumber \\
 = G_{1} \left[ G_{1}^{-1} G_{1}'  +  \Gamma_{-}' \Gamma_{-}^{-1} \right]G_{1}^{-1} \label{3.34}
\end{eqnarray}
and \eqref{rhp_theta1} for $\Theta_{1+}$
\begin{eqnarray}
\dot{\Theta}_{1+} \Theta_{1+}^{-1} = \dot{\left( G_{1} \Theta_{1-} \right)} \Theta_{1-}^{-1} G_{1}^{-1}  \nonumber \\
= G_{1} \left[ G_{1}^{-1} \dot{G}_{1} + \dot{\Theta}_{1-} \Theta_{1-}^{-1} \right]G_{1}^{-1} \label{3.35}.
\end{eqnarray}
Substituting \eqref{3.34}, \eqref{3.35} in \eqref{3.32} and using cyclicity
\begin{eqnarray}
\Tr \left\lbrace \left( \dot{\Theta}_{1+} \Theta_{1+}^{-1}  \right) \Gamma_{+}' \Gamma_{+}^{-1}  \right\rbrace = \Tr \left[ \left( G_{1}^{-1} \dot{G}_{1} + \dot{\Theta}_{1-} \Theta_{1-}^{-1} \right) \left( G_{1}^{-1} G_{1}'  +  \Gamma_{-}' \Gamma_{-}^{-1}  \right)  \right] \label{3.36}
\end{eqnarray}
In \eqref{3.36}, notice that the term 
 $\left(G_{1}^{-1}  \dot{G}_{1}  +  \dot{\Theta}_{1-}  \Theta_{1-}^{-1}  \right) \left( G_{1}^{-1} G_{1}' \right)$ is traceless. Furthermore, we have the following identity 
 $ 2 G_1^{-1} \dot{G}_{1} = - z G_{1}^{-1} \sigma_3 G_{1} +  z \sigma_3 $. The terms $ \dot{\Theta}_{1-}  \Theta_{1-}^{-1} \Gamma_{-}^{-1} \Gamma_{-}'$ in \eqref{3.36} and \eqref{3.33}  cancel each other out.
So, all that is left to compute on the contour $\gamma_{1}$ is the following
\begin{equation}
\int_{\gamma_1}  \frac{dz}{2\pi i} \Tr \left[ G_{1}^{-1} \dot{G}_{1}\Gamma_{-}' \Gamma_{-}^{-1}  \right] = \frac{1}{2} \int_{\gamma_1}  \frac{d z}{2\pi i} \Tr \left[ \left(  z G_{1}^{-1} \sigma_3 G_{1} + z \sigma_3 \right) \Gamma_{-}' \Gamma_{-}^{-1}   \right] \label{3.38}
\end{equation} 
We begin by computing the following term
\begin{eqnarray}
\Tr \left( G_{1}^{-1} \sigma_{3} G_{1} \Gamma_{-}' \Gamma_{-}^{-1}  \right) \label{3.39}
\end{eqnarray}
the term $\Gamma_{-}'  \Gamma_{-}^{-1}$ can be simplified by substituting \eqref{rhp_gamma1}
\begin{eqnarray}
\Gamma_{-}'  \Gamma_{-}^{-1} = \left( G_{1}^{-1} \Gamma_{+} \right)' \left(  \Gamma_{+}^{-1} G_{1} \right) \nonumber \\
=  G_{1}^{-1} \left(  -G_{1}' G_{1}^{-1}  + \Gamma_{+}' \Gamma_{+}^{-1} \right) G_{1} \label{3.40}
\end{eqnarray}
substituting \eqref{3.40} in \eqref{3.39} and using the cyclic property of the trace
\begin{eqnarray}
\Tr \left(  G_{1}^{-1} \sigma_{3} G_{1}  \Gamma_{-}^{-1}  \Gamma_{-}'\right) = \Tr \left[ \sigma_3 \Gamma_{+}' \Gamma_{+}^{-1}  - G_{1}' G_{1}^{-1} \sigma_3 \right] \label{3.41}
\end{eqnarray}
note that $G_{1}' G_{1}^{-1} \sigma_3$ is traceless. Substituting \eqref{3.41} in \eqref{3.38} we have
\begin{equation}
\frac{1}{2} \int_{\gamma_1}  \frac{dz}{2\pi i} \Tr \left[\left(  z G_{1}^{-1} \sigma_3 G_{1} -  z \sigma_3 \right) \Gamma_{-}' \Gamma_{-}^{-1}   \right] = \frac{1}{2} \int_{\gamma_1}  \frac{dz}{2\pi i} \Tr \left[-  z \sigma_3 \left( \Gamma_{+}' \Gamma_{+}^{-1} - \Gamma_{-}' \Gamma_{-}^{-1}  \right) \right] \label{3.42}
\end{equation}
repeating this exercise and computing the integral on $\gamma_{2}$ in \eqref{3.30}, we get exactly the same expression. Putting all together
\begin{eqnarray}
\partial_s \ln \tau[J] = \int_{\Sigma} \frac{d z}{2\pi i} \Tr \left[ -z \sigma_3  \left( \Gamma_{+}' \Gamma_{+}^{-1} - \Gamma_{-}' \Gamma_{-}^{-1} \right)    \right] \nonumber \\
= -Res_{z=\infty}\blue{ \Tr} \left(  z \sigma_3 \Gamma' \Gamma^{-1} \right) \label{3.43}
\end{eqnarray}
\end{proof}

\section{Minor expansion}
The Hilbert space $L^2(S^1)$ admits a natural orthonormal basis of Fourier modes (i.e. the monomials $z^n, \  n \in \mathbb Z$).  The minor expansion of the Fredholm determinant \eqref{tau_intro} in this particular basis gives rise to interesting combinatorics. In the case of Painlev\'e VI, V, III the combinatorics correspond to certain Nekrasov Partition functions of certain Gauge theories \cite{GL}. 

In this spirit, we would like to propose, at least, a reasonable expansion of the Fredholm determinant of our operator in a similar guise. In our case the underlying Hilbert space $L^2(i\mathbb R)$  does not immediately suggest a natural discrete orthonormal basis. Here below we want to propose a very natural such basis: the main guiding principle is that of identifying the Hardy space $\mathcal H_+$ with the Hardy space of the interior of the disk, and pulling back the monomial basis.  

\begin{proposition}\label{proposition:proposition4}
The Fredholm determinant of the $\tau$-function in \eqref{FD_AS} can be expanded, on an appropriate basis, in terms of minors, that can be labelled by Maya diagrams ($m_{X}$)
\begin{equation}
\tau [s] = \sum_{m_X \in \mathbb{M};\, \, \vert p \vert = \vert h \vert } \alpha_{p_X}^{h_X} \beta_{h_X}^{p_X}
\end{equation}
where the coefficients $\alpha_{m}^{n}$, $\beta_{n}^{m}$ are as follows 
\begin{equation}
\beta_{0}^{0} = \alpha_{0}^{0} = \left(4\frac{\partial^2}{\partial s^2} - 1  \right)  \left( 1- \frac{\partial}{\partial s} \right)^{-1} Ai(s)
\end{equation}
where $Ai(s)$ is the Airy function, and
\begin{eqnarray}
\alpha_{m}^{n} = \frac{(-1)^{m+n}}{(m!)^2 n! (m+n+1)!} \left( \widetilde{D} \right)^{m+n} \alpha_{0}^{0} \\
\beta_{n}^{m} = \frac{(-1)^{m+n}}{(n!)^2 m! (m+n+1)!} \left( \widetilde{D} \right)^{m+n} \beta_{0}^{0} 
\end{eqnarray}
with $\widetilde{D}= 2 \left( \frac{\partial}{\partial s} -1 \right)^2  \left(4 \frac{\partial^2}{\partial s^2} -s \right)$
\end{proposition}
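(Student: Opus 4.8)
The plan is to represent $\det_{\mathcal H}[\mathbf 1+U]$ from \eqref{FD_AS} as a convergent sum of finite minors in a discrete orthonormal basis adapted to the Hardy splitting, and then to evaluate each minor in closed form. The first step is to manufacture such a basis. Following the guiding principle quoted above, I would fix a Cayley transform $z\mapsto w=\tfrac{z+c}{z-c}$ (with $c>0$) that maps the $+$ side $\{\Re z<0\}$ conformally onto the unit disk and the contour $i\mathbb R$ onto the unit circle, and then pull back the monomials $w^n$, weighted by the square root of the Jacobian so that the map becomes an $L^2$-isometry. This yields an orthonormal basis $\{\varphi_n\}_{n\ge0}$ of $\mathcal H_+$ with $\varphi_n(z)\propto (z+c)^n/(z-c)^{n+1}$ (analytic in $\Re z<0$, single pole at $z=c$) and, by the conjugate construction, a basis $\{\psi_n\}_{n\ge0}$ of $\mathcal H_-$ with pole at $z=-c$. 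These Malmquist--Walsh functions are rational with a single pole, which is precisely what renders the later residue calculus elementary.

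The second step is the combinatorial bookkeeping of the minor expansion. Writing $\det_{\mathcal H}[\mathbf 1+U]=\sum_{k\ge0}\tfrac1{k!}\sum_S\det[U_S]$ over finite multi-indices $S$, the block off-diagonal form $U=\bigl(\begin{smallmatrix}0&a\\ b&0\end{smallmatrix}\bigr)$ forces every surviving minor to draw equally many indices from the $+$ sector as from the $-$ sector; with an unbalanced selection one of the two rectangular blocks fails to be square and the minor vanishes. A balanced selection $S=\{h_1,\dots,h_k\}\sqcup\{p_1,\dots,p_k\}$ factorises as $\det[U_S]=\pm\det\bigl[a_{h_i p_j}\bigr]\det\bigl[b_{p_i h_j}\bigr]$, which is the origin of the product $\alpha_{p_X}^{h_X}\beta_{h_X}^{p_X}$: the first determinant assembles the single matrix elements $\alpha_m^n=\langle\varphi_m,a\,\psi_n\rangle$ into $\alpha_{p_X}^{h_X}$, the second assembles $\beta_n^m=\langle\psi_m,b\,\varphi_n\rangle$ into $\beta_{h_X}^{p_X}$. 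Encoding each balanced particle--hole configuration by a charge-zero Maya diagram $m_X$ (holes $h$ in $\mathcal H_+$, particles $p$ in $\mathcal H_-$, neutrality condition $|p|=|h|$) identifies the index set $\mathbb M$; absolute convergence of the rearranged series is guaranteed by the trace-class bounds \eqref{4.100}--\eqref{4.101}.

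The third and substantive step is to evaluate the single matrix elements. Inserting the explicit kernels \eqref{3.12} and \eqref{3.14} into $\alpha_m^n$ and $\beta_n^m$ and collapsing the $i\mathbb R$ integration by residues (the basis functions extend analytically to one half-plane), each matrix element reduces to a single contour integral $\int_{\gamma}e^{\pm\nu(\lambda,s)}R_{m,n}(\lambda)\,d\lambda$, where $R_{m,n}$ is the rational function inherited from the poles of $\varphi_m$ and $\psi_n$ at $\pm c$. Two properties of the Painlev\'e II phase then convert these into differential operators in $s$ acting on Airy functions. First, $\partial_s\nu=\pm\lambda$, so each $\partial_s$ multiplies the integrand by $\lambda$; second, $\partial_\lambda\nu=4\lambda^2-s$, so applying $4\partial_s^2-s$ under the integral sign yields $\partial_\lambda(e^{\pm\nu})$, which after one integration by parts acts as $-\partial_\lambda$ on the rational kernel $R_{m,n}$ and thereby \emph{raises the basis index}. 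Choosing the Cayley parameter $c=1$ aligns the pole of the basis functions with the point where $1-\partial_s$ (multiplication by $1\mp\lambda$) cancels the factor $\lambda\mp1$, so that the pole of $\varphi_0,\psi_0$ is resolved into the resolvent $(1-\partial_s)^{-1}$ acting on the seed integral $\int_\gamma e^{\pm\nu}\,d\lambda\propto Ai(s)$. Tracking these two operations through one increment of $m+n$ produces the single-step operator $\widetilde D=2(\partial_s-1)^2(4\partial_s^2-s)$ and, from the multiplicities with which the index is raised in each sector, the prefactor $1/\bigl((m!)^2 n!(m+n+1)!\bigr)$; the seed itself evaluates to $\alpha_0^0=\beta_0^0=(4\partial_s^2-1)(1-\partial_s)^{-1}Ai(s)$, the equality of the two seeds reflecting the symmetry $\nu\mapsto-\nu$ exchanging the two sectors.

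I expect the main obstacle to lie entirely in this third step: establishing the \emph{exact} operator $\widetilde D$, seed, and prefactors, as opposed to merely exhibiting \emph{some} differential recursion. The difficulty is twofold. First, the Cayley centre $c$ must be tuned so that the rational factors $R_{m,n}$ organise into honest integer powers of a single operator; a generic $c$ produces an inhomogeneous family rather than $(\widetilde D)^{m+n}$, and verifying that $c=1$ achieves this—and that the two poles at $\pm1$ interact correctly across the $a$ and $b$ sectors—is delicate. Second, one must commute the resolvent $(1-\partial_s)^{-1}$ past the Airy operator $4\partial_s^2-s$ and check that the factorials emerge with exactly the exponents $(m!)^2$, $n!$ and $(m+n+1)!$ claimed; this matching between the residue calculus and the minor combinatorics is precisely where sign and normalisation errors are most likely to accumulate. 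By contrast, the convergence and trace-class input is already supplied by Theorem~\ref{theorem:theorem1}, and the Maya-diagram labelling of the minors is the standard charge-zero free-fermion bookkeeping, so neither requires a new idea.
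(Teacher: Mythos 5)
Your proposal is correct and follows essentially the same route as the paper: the paper's basis (obtained there as Fourier transforms of Laguerre functions via the Paley--Wiener theorem, following Shen) is exactly your family of Cayley-pulled-back monomials with poles at $\pm 1$, the matrix elements $\alpha_m^n$, $\beta_n^m$ are likewise reduced to single contour integrals of $e^{\pm\nu}$ against rational kernels, and the operator $\widetilde{D}=2\left(\partial_s-1\right)^2\left(4\partial_s^2-s\right)$ together with the seed $\left(1-\partial_s\right)^{-1}Ai(s)$ emerges from precisely the integration-by-parts/phase-derivative mechanism you describe, followed by the same charge-zero Maya-diagram bookkeeping. The only differences are cosmetic: the paper reaches the basis through Laguerre functions rather than a Cayley transform (with a $1/n!$ normalization that accounts for part of the factorial prefactors you flag as the delicate point), and it organizes the index-raising as explicit recursion relations in $m$ and $n$ rather than as a single-step operator argument.
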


\begin{proof}

Recall that, 
\begin{equation}
\tau(s) = \det \left[ Id_{L^2(i\mathbb{R})} -  \left[ \begin{array}{cc}  0 & a \\ b & 0  \end{array}   \right] \right]
=\det \left[ Id_{L^2(i\mathbb{R})\otimes \mathbb C^2}  -  \left[
 \begin{array}{cccc}  
 0 & 0  & 0 & a_{12} \\ 
 0 & 0  & 0&  0 \\ 
 0 & 0   &0&  0 \\ 
 b_{21} & 0  &0&   0  \end{array}   \right] 
 \right] 
 \label{5.1}
\end{equation}
of which $a_{12}$ and $b_{21}$ are the only non zero entries. Therefore, the  determinant of the $4 \times 4$ block operator can be reduced to a determinant of a $2\times 2$ block operator. Let us denote $a_{12}\equiv \alpha$, $b_{21}\equiv \beta$
\begin{equation}
\tau[s] = \det\left[ Id_{L^2(i\mathbb{R})}- \left( \begin{array}{cc}
0 & a_{12} \\ b_{21} & 0 
\end{array}  \right)  \right] = \det\left[ Id_{L^2(i\mathbb{R})}- \left( \begin{array}{cc}
0 & \alpha \\ \beta & 0 
\end{array}  \right)  \right] \label{5.2}
\end{equation}
We remind the reader here that the block decomposition is due to the splitting $L^2(i\mathbb R) = \mathcal H = \mathcal H_+\oplus \mathcal H_-$.

The first step to obtain the minor expansion is constructing a suitable basis to expand $\alpha(z,w)$ and $\beta(z,w)$,
\begin{equation}
\alpha(z,w): \mathcal{H}_{-} \rightarrow \mathcal{H}_{+} \quad ; \quad \beta(z,w) : \mathcal{H}_{+} \rightarrow \mathcal{H}_{-}. \label{5.3}
\end{equation}

\subsection*{Basis construction} \label{subsec:basis}
The spaces $\mathcal{H}_{\pm}$ are Hardy spaces of functions analytic on the left and right half of the complex planes respectively. To construct the bases of $\mathcal{H}_{\pm}$, we employ the Paley-Weiner theorem which identifies $\mathcal H_+$ as the image under Fourier transform of functions supported on a half--line. Specifically, let  $\mathbb C_{+} = \left\lbrace z: z=x+iy , y>0  \right\rbrace$, 
\begin{equation}
H^2(\mathbb C_{+}) = \left\lbrace f:f\, \textrm{is analytic in} \, \,\mathbb C_{+} \,\, \textrm{and} \sup_{0<y<+\infty}\int_{-\infty}^{+\infty} \vert f(z) \vert^2 dx < \infty    \right\rbrace \label{5.4}.
\end{equation}
By definition, the boundary values of $f\in H^2(\mathbb C_+)$ on $\mathbb R$ define a function in $L^2(\mathbb R)$ and we can think of $H^2(\mathbb C_{+}) $ as a (closed) subspace of $L^2(\mathbb R)$. With this understanding, the Paley--Wiener theorem can be stated as the following identity:
\begin{equation}
\mathcal{F} H^2(\mathbb C_{+}) = L^2[0,\infty) \label{5.5}.
\end{equation}
The space $\mathcal{H}_{+}$  can be isometrically mapped to $H^2(\mathbb C_{+})$ by a variable change $z \rightarrow iz$. 
We have that Laguerre functions  ($L_{n}^{\lambda}(z) z^\lambda {\rm e}^{-z} $) provide a basis of $L^2(\mathbb R_+)$. Using the Paley--Wiener theorem, upon inverse Fourier transform, they yield a basis for $H^2(\mathbb C_{+})$ and an innocent change of variable $ z \rightarrow iz$ gives a basis on $\mathcal{H}_{+}$. We can comfortably restrict ourselves to $\lambda =0$. Following \cite{Shen}, the Fourier transform $(\hat{\ell}_{n}^{\lambda}(t))$ of the Laguerre functions $L_{n}^{\lambda}(x){\rm e}^{-\frac x 2} x^\frac \lambda 2$ for $\lambda=0$, and using the notation and $L_{n}^{0} \equiv L_{n}$, is 
\begin{gather}
L_{n}(x) {\rm e}^{-\frac x 2} = \frac{{\rm e}^{-\frac x 2}}{n!} \sum_{k=0}^n \frac{(-n)_{k} x^k}{k!}   \label{5.6} \\
\hat{\ell}_{n} (t) = \frac{-2}{n!} \left( \frac{1+2it}{2it-1}  \right)^n \frac{1}{2it-1} \label{5.7}
\end{gather}
$\hat{\ell}_{n}$ forms a complete basis on $H^2(\mathbb C_{+}, dt)$. With the change of variable $2it =z$, \eqref{5.3} reads
\begin{equation}
\hat{\ell}_{n}(z) = \frac{-2}{n!} \left( \frac{1+z}{z-1}  \right)^n \frac{1}{z-1} \label{5.8}
\end{equation}
and will form a basis on $H^2(\mathcal{H}_{+}, \frac{-i}{2} dz)$. In conclusion 
\begin{equation}
e_{\mathcal{H}_{+}}^n = \frac{i}{n!} \left( \frac{1+z}{z-1}  \right)^n \frac{1}{z-1} \label{5.9}
\end{equation}
is a basis on $H^2(\mathcal{H}_{+}, dz)$. Similarly,
\begin{equation}
e_{\mathcal{H}_{-}}^n = \frac{i}{n!} \left( \frac{z-1}{z+1}  \right)^n \frac{1}{z+1} \label{5.10}
\end{equation}
is a basis on $H^2(\mathcal{H}_{-}, dz)$

\subsection*{Minor expansion} \label{subsec:minor}
Expanding $\alpha(z,w)$ and $\beta(z,w)$ in the basis $e_{\mathcal{H}_{+}}$ and $e_{\mathcal{H}_{-}}$, the $\tau$-function \eqref{FD_AS} can be expressed as a minor expansion. Starting with $\alpha(z,w)$,
\begin{gather}
\alpha_{m}^{n} = \langle \alpha(z,w) e_{\mathcal{H}_{-}}^{n} , e_{\mathcal{H}_{+}}^{m}(z) \rangle \nonumber \\
= \int_{i\mathbb{R}} \frac{dz}{2\pi i} \bar{e}_{\mathcal{H}_{+}}^m(z) \int_{i\mathbb{R}-\epsilon} \frac{dw}{2\pi i} \alpha(z,w) e_{\mathcal{H}_{-}}^n(w) \nonumber \\
=\frac{-\kappa}{m!n!}\int_{i\mathbb{R}} \frac{dz}{2\pi i} \left(\frac{\bar{z}+1}{\bar{z}-1}\right)^{m} \frac{1}{(\bar{z}-1)} \int_{i\mathbb{R}-\epsilon} \frac{dw}{2\pi i} \frac{e^{\nu(w,s)}}{w-z} \left(\frac{w-1}{w+1}\right)^{n} \frac{1}{(w+1)} \nonumber \\
 =\frac{\kappa}{m!n!}\int_{i\mathbb{R}-\epsilon} \frac{dw}{2\pi i} \frac{e^{\nu(w,s)}(w-1)^{m+n}}{(w+1)^{m+n+2}} \label{5.11}
\end{gather}
Similarly for $\beta(w,z)$
\begin{gather}
\beta_{m}^{n} = \langle \beta(z,w) e_{\mathcal{H}_{+}}^{n} , e_{\mathcal{H}_{-}}^{m}(z) \rangle \nonumber \\
= \int_{i\mathbb{R}} \frac{dz}{2\pi i} \bar{e}_{\mathcal{H}_{-}}^m(z) \int_{i\mathbb{R}+\epsilon} \frac{dw}{2\pi i} \beta(z,w) e_{\mathcal{H}_{+}}^n(w) \nonumber \\
=\frac{\kappa}{m!n!}\int_{i\mathbb{R}} \frac{dz}{2\pi i} \left(\frac{\bar{z}-1}{\bar{z}+1}\right)^{m} \frac{1}{(\bar{z}+1)} \int_{i\mathbb{R}+\epsilon} \frac{dw}{2\pi i} \frac{e^{-\nu(w,s)}}{w-z} \left(\frac{w+1}{w-1}\right)^{n} \frac{1}{(w-1)}  \nonumber \\
 =\frac{-\kappa}{m!n!}\int_{i\mathbb{R}+\epsilon} \frac{dw}{2\pi i} \frac{e^{-\nu(w,s)}(w+1)^{m+n}}{(w-1)^{m+n+2}}  \label{5.12}
\end{gather}

\subsection*{Recurrence relations} \label{subsec:rec.rel}
$\alpha_m^n$, $\beta_{m}^{n}$ can be made explicit by noticing that the functions such as $\int \frac{dw}{2\pi i}(w+1)^m e^{-\theta(w)}$ can be written as some derivatives of the Airy function. Define the function $\chi_{m+n}$ as  
\begin{gather}
\chi_{m+n} = \frac{(w-1)^{m+n}}{(w+1)^{m+n+2}} e^{\nu(w,s)} \label{5.13}.
\end{gather}
Then 
$\alpha_{m}^{n}$ in terms of $\chi_{m+n}$ is simply
\begin{gather}
\alpha_{m}^{n} = \frac{\kappa}{m!n!}\int_{i\mathbb{R}-\epsilon} \frac{dw}{2\pi i} \frac{e^{\nu(w,s)}(w-1)^{m+n}}{(w+1)^{m+n+2}} = \frac{\kappa}{m! n!}\int_{i\mathbb{R}-\epsilon} \frac{dw}{2\pi i} \chi_{m+n} \label{5.14}
\end{gather}
and $\beta_{n}^m$ in terms of $\chi_{m+n}$ is
\begin{gather}
\beta_{n}^{m} = \frac{-\kappa}{m!n!}\int_{i\mathbb{R}+\epsilon} \frac{dw}{2\pi i} \frac{e^{-\nu(w,s)}(w+1)^{m+n}}{(w-1)^{m+n+2}} \nonumber \\
= \frac{\kappa}{m!n!}\int_{i\mathbb{R}-\epsilon} \frac{dw}{2\pi i} \frac{e^{\nu(w,s)}(w-1)^{m+n}}{(w+1)^{m+n+2}} 
 = \frac{\kappa}{m! n!}\int_{i\mathbb{R}-\epsilon} \frac{dw}{2\pi i} \chi_{m+n} \label{5.15}
\end{gather}
The function
$\chi_{m+n}$ obeys a recurrsion relation that can be derived as follows
\begin{gather}
\int \frac{dw}{2\pi i} \chi_{m+n}(w,s) = \int \frac{dw}{2\pi i} \frac{(w-1)^{m+n}}{(w+1)^{m+n+2}} e^{\nu(w,s)} \nonumber \\
 = \frac{2}{(m+n+1)}\int \frac{dw}{2\pi i} e^{\nu(w,s)}\partial_{w} \left[\frac{(w-1)^{m+n+1}}{(w+1)^{m+n+1}} \right] \nonumber \\
= -\frac{2}{(m+n+1)}\int \frac{dw}{2\pi i} \left[\frac{(w-1)^{m+n+1}}{(w+1)^{m+n+1}} \right] \partial_{w} e^{\nu(w,s)} \nonumber \\
=-\frac{2}{(m+n+1)}\int \frac{dw}{2\pi i} \frac{(w-1)^{m+n-1}}{(w+1)^{m+n+1}} (w-1)^{2}  (4w^2 -s) e^{\nu(w,s)} \nonumber \\
 =-\frac{2}{(m+n+1)}\int \frac{dw}{2\pi i} \left( \frac{\partial}{\partial s} -1 \right)^2  \left(4 \frac{\partial^2}{\partial s^2} -s \right) \chi_{m+n-1}(w,s) \label{5.16}
\end{gather}
which gives the following equation 
\begin{gather}
\int_{i\mathbb{R}} \frac{dw}{2\pi i} \left[ \chi_{m+n} + \frac{2}{(m+n+1)} \left( \frac{\partial}{\partial s} -1 \right)^2  \left(4 \frac{\partial^2}{\partial s^2} -s \right) \chi_{m+n-1}\right] =0 \label{5.17}
\end{gather}

We define a function $Ci(t)$  as follows:
\begin{gather}
Ci(t) := \int \frac{dw}{2\pi i} \frac{1}{(w+1)} e^{\theta(w)} = \left( 1- \frac{\partial}{\partial s} \right)^{-1} Ai(t) \label{5.18}
\end{gather}
Then the function
$\chi_{0}$ can be computed in terms of $Ci(t)$
\begin{gather}
\int_{i\mathbb{R}} \frac{dw}{2\pi i} \chi_{0} = \int \frac{dw}{2\pi i} \frac{1}{(w+1)^{2}} e^{\nu(w,s)} 
= -\int \frac{dw}{2\pi i} \partial_{w} \left( \frac{1}{w+1} \right) e^{\nu(w,s)} \nonumber \\
=\int \frac{dw}{2\pi i}  \left( \frac{1}{w+1} \right) \partial_{w} e^{\nu(w,s)} 
= \int \frac{dw}{2\pi i}  \left( \frac{1}{w+1} \right) (4w^2 -s) e^{\nu(w,s)} \nonumber \\
= \int \frac{dw}{2\pi i} \left( 4 \frac{\partial^2}{\partial s^2} - s \right) \left( \frac{1}{w+1} \right) e^{\nu(w,s)} \nonumber \\
= \left( 4 \frac{\partial^2}{\partial s^2} - s \right) Ci(s) \label{5.19}
\end{gather}
Since both  $\alpha_{m}^{n}$ and $\beta_n^m$ are integrals of $\chi_{m+n}$\eqref{5.14}, \eqref{5.15}, the recurrsion relation \eqref{5.17} and the expression for $\chi_{0}$ \eqref{5.19} give the explicit expressions for $\alpha_{m}^{n}$ and $\beta_n^m$. Starting with the recursion relation for $\alpha_{m}^{n}$ and $\beta_n^m$
\begin{eqnarray}
n(m+n+1)\alpha_m^n + 2 \left( \frac{\partial}{\partial s} -1 \right)^2  \left(4 \frac{\partial^2}{\partial s^2} -s \right) \alpha_{m}^{n-1} =0 \label{rec.alpha.n}\\
n(m+n+1)\beta_m^n + 2 \left( \frac{\partial}{\partial s} -1 \right)^2  \left(4 \frac{\partial^2}{\partial s^2} -s \right) \beta_{m}^{n-1} =0 \label{rec.beta.n}
\end{eqnarray}
with 
\begin{equation}
\alpha_{0}^{0} = \left(4\frac{\partial^2}{\partial s^2} - 1  \right)  \left( 1- \frac{\partial}{\partial s} \right)^{-1} Ai(s) \label{zero}
\end{equation}
Further,
\begin{equation}
n \alpha_{m-1}^{n} = m \alpha_{m}^{n-1} \quad ; \quad n\beta_{m-1}^{n} = m\beta_{m}^{n-1} \label{5.22} 
\end{equation}
Notice that \eqref{rec.alpha.n} and \eqref{rec.beta.n} are recurrsive relations in $n$. Using \eqref{5.22} similar recurrsion relations in $m$ can be obtained
\begin{align}
m (m+n+1) \alpha_{m}^{n} + 2 \left( \frac{\partial}{\partial s} -1 \right)^2  \left(4 \frac{\partial^2}{\partial s^2} -s \right) \alpha_{m-1}^{n} =0 \label{rec.alpha.m}\\
m (m+n+1) \beta_{m}^{n} + 2 \left( \frac{\partial}{\partial s} -1 \right)^2  \left(4 \frac{\partial^2}{\partial s^2} -s \right) \beta_{m-1}^{n} =0 \label{rec.beta.m}
\end{align} 
Define $ 2 \left( \frac{\partial}{\partial s} -1 \right)^2  \left(4 \frac{\partial^2}{\partial s^2} -s \right) = \widetilde{D}$. From \eqref{rec.alpha.n} 
\begin{align}
\alpha_{m}^{n} = (-1)^{m+n} \frac{(1+m)}{n! (m+n+1)!} \widetilde{D}^{n} \alpha_{m}^{0}
\end{align}
now using \eqref{rec.alpha.m}
\begin{align}
\alpha_{m}^{0} = (-1)^{m} \frac{1}{m! (m+1)!} \widetilde{D}^m \alpha_{0}^{0}
\end{align}
In terms of $\alpha_{0}^{0}$, $\alpha_{m}^{n}$ is explicit
\begin{align}
\alpha_{m}^{n} = \frac{(-1)^{m+n}}{(m!)^2 n! (m+n+1)!} \widetilde{D}^{m+n} \alpha_{0}^{0} 
\end{align}
Repeating the same computation for $\beta_{n}^{m}$
\begin{align}
\beta_n^{m} = \frac{(-1)^{m+n}}{(n!)^2 m! (m+n+1)!} \widetilde{D}^{m+n} \beta_{0}^{0}
\end{align}

\subsection*{Maya diagrams} \label{subsec:maya}

 The determinant of an operator $A \in \mathbb{C}^{m\times m}$ can be expanded in terms of its principal minors \cite{CGL}. For a finite $m\times m $ matrix $A$, the minor expansion is given by 
\begin{equation}
\det\left( 1+A  \right) = \sum_{n=0}^{\infty} \sum_{i_1< ... <i_n} \det(A_{i_p,i_q})_{p,q=1}^{n} \label{5.23}
\end{equation}
This sequence obviously terminates after $n=m$. Now generalise the minor exansion for infinite dimensional matrix.
\begin{itemize}
\item 
Let $K$ be a infinite dimensional matrix. Instead of labelling by $\left\lbrace 1,...,m \right\rbrace$, $K$ is labelled by an infinite discrete set. Define a half-integer lattice $\mathbb{Z}' = \mathbb{Z}+ \frac{1}{2}$. The set of all finite subsets of $\mathbb{Z}'$ is given by $\left\lbrace 0,1  \right\rbrace^{\mathbb{Z}'}$ and $\chi \subset \left\lbrace 0,1  \right\rbrace^{\mathbb{Z}'}$. For $\mathbb{Z}'_{\mp} = \mathbb{Z}'_{\lessgtr 0}$, define 'particles' ($p_{\chi}$) to be the positions $p_{\chi} = \chi \cap \mathbb{Z}'_{+}$ and 'holes'($h_{\chi}$) to be the positions $h_{\chi}= \chi \cap \mathbb{Z}'_{-}$. $(p_{\chi}, h_{\chi})$ define point configurations on $\mathbb{Z}'$. Furthermore, for a block matrix, we have two indices.
\begin{itemize}
\item Expansion of the block determinant, given by the particles and holes $(p_{\chi},h_{\chi})$

\item The index within the block, which is called the colour index. 
\end{itemize}
\item Maya diagram $m_{\chi}$ is constructed by drawing filled circles at the points $\left( \mathbb{Z}'_{+}\backslash p_{\chi}  \right)\cup h_{\chi}$ and empty circles at $p_{\chi} \cup \left( \mathbb{Z}'_{-}\backslash h_{\chi} \right)$. Set of all Maya diagrams is denoted by $\mathbb{M} = \cup_{\chi} m_{\chi}$.
\item
 For $\det (1+K)$, the minors can be labelled by the half integer lattice $\mathbb{Z}'$. Rows and coloumns will now be labelled by $\chi \subset \left\lbrace 0,1  \right\rbrace^{\mathbb{Z}'}$. The minor expansion is given by 
\begin{equation}
\det\left[1+ K \right] = \sum_{ \chi \subset \left\lbrace 0,1  \right\rbrace^{\mathbb{Z}'}} A_{\chi}
\end{equation}

\item Maya diagrams can also be written as young diagrams by playing the following game. Reading the maya diagram from the left end, draw a horizontal line pointing to the right $(\rightarrow)$ for every filled circle (\tikz\draw[black,fill=black] (0,0) circle (.5ex);) and a vertical line pointing downwards $(\downarrow)$ for every empty circle (\tikz\draw[black] (0,0) circle (.5ex);). 
\end{itemize}

Let us work out a simple example. Let $K$ be a $2\times 2$ block matrix. $A$, $B$ are $2\times 2$ matrices
\begin{equation}
K = \left[ \begin{array}{cc}
0 & A \\ B& 0
\end{array} \right]
\end{equation}
The labelling goes as shown in fig.\ref{minor}. The blue and red label are the colour indices.

\begin{figure}[H]
\centering
\includegraphics{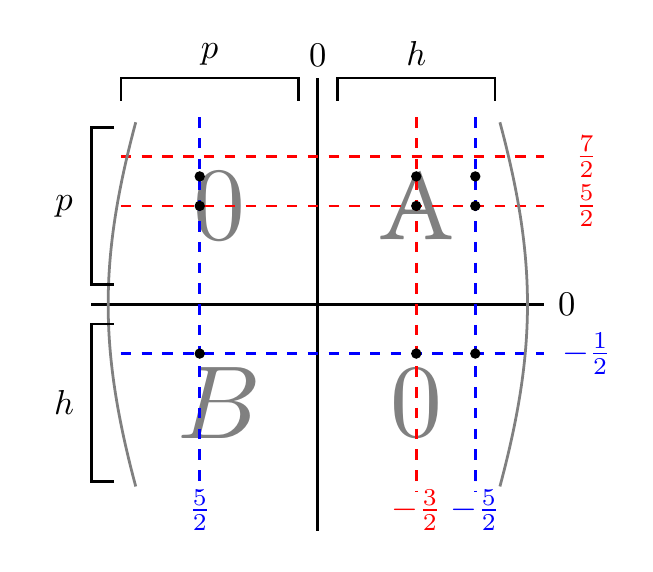}
\caption{Minor expansion \label{minor}}
\end{figure}
 In fig. \ref{minor}, collect the blue and red terms in terms of particles and holes. It is clear from here that minor contribution for the configurations where $|p|\neq |h|$ is $0$. {\color{blue}$p_{\chi} = \left( \frac{5}{2} \right)$}, {\color{blue}$h_{\chi} = \left( -\frac{5}{2},-\frac{1}{2}  \right)$}; {\color{red}$p_{\chi} = \left( \frac{5}{2}, \frac{7}{2} \right)$}, {\color{red}$h_{\chi} = \left( -\frac{3}{2} \right)$}. This gives the blue and red Maya diagrams. {\color{blue} $m_{\chi} = \left( \frac{5}{2},-\frac{5}{2},-\frac{1}{2} \right)$}; {\color{red} $m_{\chi} = \left( \frac{5}{2},\frac{7}{2},-\frac{3}{2} \right)$}

\begin{figure}[H]
\centering
\includegraphics[scale=0.7]{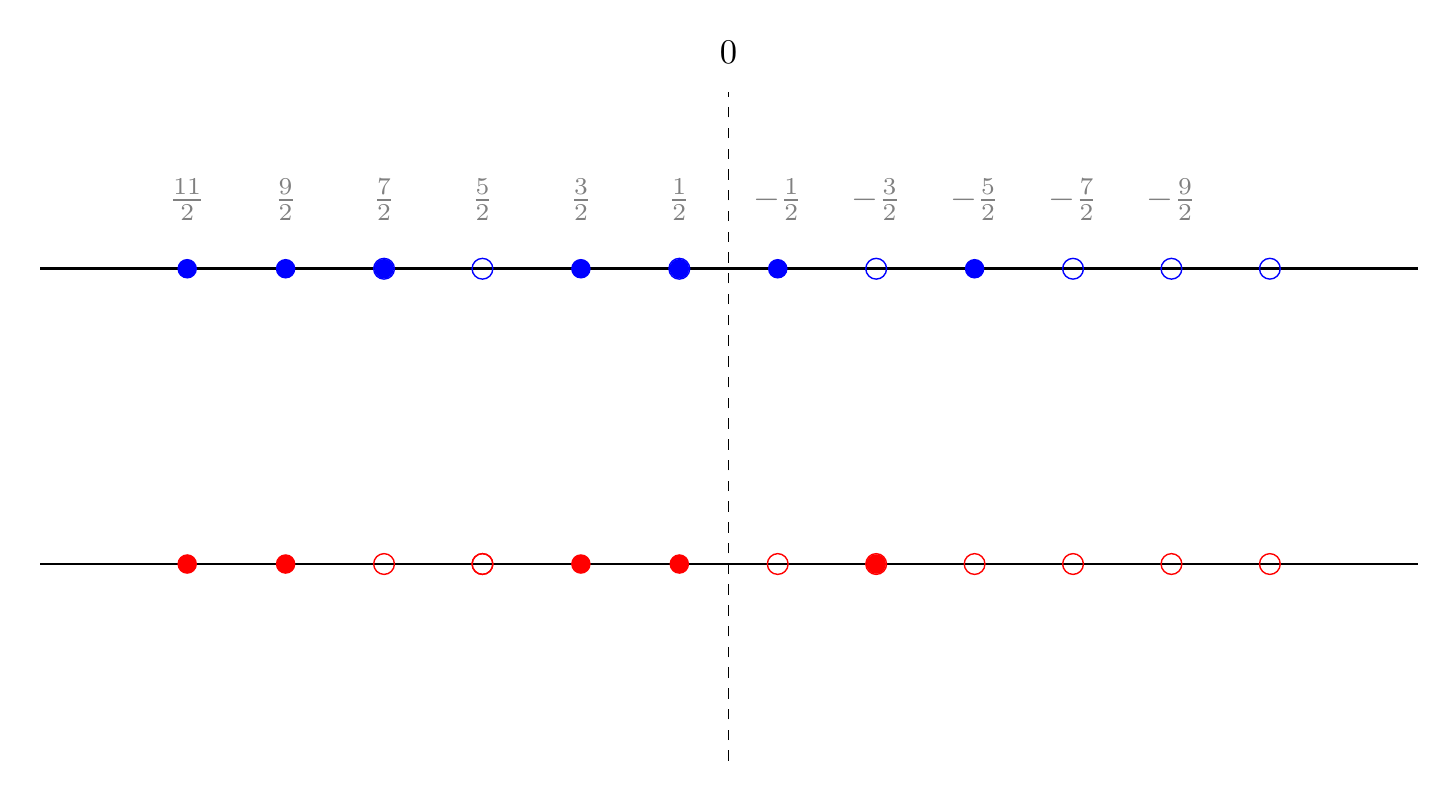}
\caption{Maya diagrams \label{maya}}
\end{figure}
and the corresponding Young tableaux are
{\color{red} \yng(2,2,1)}
{\color{blue} \yng(4,1)}

Now for the Ablowitz-Segur $\tau$-function, 
\begin{equation}
\tau[s] = \det\left[ 1 - \left(\begin{array}{cc}
0 & \alpha \\ \beta & 0
\end{array} \right) \right]
\end{equation}
$\alpha$ and $\beta$ can be expanded on a discrete basis  $e_{\mathcal{H}_{\pm}}(z)$
\begin{gather}
\alpha(z,w) = \sum_{m,n \in \mathbb{Z}_{+}} \alpha_{m}^n e_{\mathcal{H}_{+}}(z)^{m} e_{\mathcal{H}_{-}}(w)^{n} ; \quad \beta(z,w) = \sum_{m,n \in \mathbb{Z}_{+}} \beta_{m}^n e_{\mathcal{H}_{-}}(z)^{m} e_{\mathcal{H}_{+}}(w)^{n} \label{5.25}
\end{gather}

Since $\alpha_m^n$ and $\beta_n^m$ in \eqref{5.25} are not matrices themselves, the corresponding Maya diagrams are "colourless". If $a_m^n$ and $b_n^m$ were $N\times N$ matrices themselves, the corresponding entries in the expansion would be $a_{m;\beta}^{n;\alpha}$ and $b_{n;\alpha}^{m;\beta}$ where $\alpha, \beta = \lbrace 1,...,N \rbrace$ would be the colour indices. Furthermore, given the off-diagonal structure of the matrix $U$, the minors with $\vert p \vert \neq \vert h \vert$ vanish. Therefore, the minor expansion reads, 
\begin{equation}
\tau [s] = \sum_{m_{\chi} \in \mathbb{M};\, \, \vert p \vert = \vert h \vert } \alpha_{p_{\chi}}^{h_{\chi}} \beta_{h_{\chi}}^{p_{\chi}} .\label{5.26}
\end{equation}
The proof is now complete.
\end{proof}

It would be extremely interesting to interpret the terms in this minor expansion in a similar way to the case of Painlev\'e VI, V, III. However, to our knowledge, in the case of the second Painlev\'e transcendent, there is no direct analog  connection with some field theory. Nonetheless the computation proceeds in a  rather natural way and may prove of use in future applications.

\medskip 

\noindent {\bf Acknowledgements.}\\ 
The author thanks Alexander Its, Marco Bertola, Mattia Cafasso and Tamara Grava for illuminating discussions and suggestions. The author acknowledges   the   Centre International de Rencontres Mathematiques, Luminy, for hospitality during the semester "Integrability and Randomness in mathematical Physics" and LYSM grant for financial support. Part of the work was completed during a visit to the Department of Mathematics and Statistics at Concordia University which was supported by H2020-MSCA-RISE-2017 PROJECT No. 778010 IPADEGAN, and the author is greatful for the financial support and hospitality.

\nocite{*}
\bibliographystyle{abbrv}
\bibliography{\jobname}

\end{document}